%% file: main.tex
\ifdefined\pdfminorversion
\fi

\documentclass[acmsmall,screen,nonacm]{acmart}

\setcopyright{none}
\acmDOI{}
\acmISBN{}
\authorsaddresses{}

\makeatletter
\if@ACM@newfonts
\else
  \PackageError{HornCraft}{Required ACM font packages are missing}{%
    Install libertine, inconsolata (zi4), and newtx before compiling.}
\fi
\makeatother
\usepackage{mathtools}
\usepackage{booktabs}
\newcommand{\evalCount}[1]{\makebox[3.5em][r]{#1}}
\newcommand{\evalOutcomes}[2]{\makebox[1em][r]{#1} S + #2 U}
\usepackage{placeins}
\usepackage{comment}
\usepackage{xparse}
\usepackage{listings}
\usepackage{xcolor}
\usepackage{tikz}
\usetikzlibrary{arrows.meta,calc,positioning}

\floatstyle{ruled}
\newfloat{algorithm}{tbp}{loa}
\floatname{algorithm}{Algorithm}
\floatstyle{plain}

\newcounter{algline}[algorithm]

\newcommand{\algline}[1]{\refstepcounter{algline}\label{#1}\thealgline}
\newcommand{\alglineref}[1]{\hyperref[#1]{line~\ref*{#1}}}
\newcommand{\alglinesref}[2]{\hyperref[#1]{lines~\ref*{#1}--\ref*{#2}}}
\usepackage[colorinlistoftodos]{todonotes}

\apptocmd{\thebibliography}{\clubpenalty=10000\widowpenalty=10000}{}{}

\IfFileExists{mathpartir.sty}{
  \usepackage{mathpartir}
}{
  \def\bidpurulelabel right=##1\relax{##1}
  \NewDocumentCommand{\inferrule}{s O{} m m}{%
    \frac{\begin{gathered}##3\end{gathered}}{\begin{gathered}##4\end{gathered}}
    \;\text{\scriptsize\textsc{\expandafter\bidpurulelabel##2\relax}}%
  }
}

\graphicspath{{figures/}}

\title{LLM-Assisted Bi-Approximate CHC Solving}

\author{Baiting Jiang}
\affiliation{
  \institution{Tsinghua University}
  \city{Beijing}
  \country{China},
  \href{mailto:jbt24@mails.tsinghua.edu.cn}{jbt24@mails.tsinghua.edu.cn}
}
\email{jbt24@mails.tsinghua.edu.cn}

\begin{document}
\raggedbottom

\input{sections/0-Abstract}

\begin{CCSXML}
<ccs2012>
<concept>
<concept_id>10003752.10003790.10011192</concept_id>
<concept_desc>Theory of computation~Verification by model checking</concept_desc>
<concept_significance>500</concept_significance>
</concept>
<concept>
<concept_id>10003752.10003790.10002990</concept_id>
<concept_desc>Theory of computation~Logic and verification</concept_desc>
<concept_significance>300</concept_significance>
</concept>
<concept>
<concept_id>10003752.10003790.10003794</concept_id>
<concept_desc>Theory of computation~Automated reasoning</concept_desc>
<concept_significance>300</concept_significance>
</concept>
</ccs2012>
\end{CCSXML}

\ccsdesc[500]{Theory of computation~Verification by model checking}
\ccsdesc[300]{Theory of computation~Logic and verification}
\ccsdesc[300]{Theory of computation~Automated reasoning}

\keywords{
model checking,
constrained Horn clauses,
large language models
}

\maketitle

\input{sections/1-Introduction}
\input{sections/2-Overview}
\input{sections/3-Preliminaries}
\input{sections/4-Compressing-CHC-Derivations}

\input{sections/4.1-Derivation-Power-Under-approximations}

\input{sections/4.2-Witness-Carrying-DPUs-for-Linear-CHCs}
\input{sections/4.3-Compositional-Construction-of-DPUs}
\input{sections/4.4-DPUs-for-Nonlinear-CHCs-and-Reach-Fact-Reuse}

\input{sections/5-LLM-Assisted-Bi-Approximate-CHC-Solving}

\input{sections/5.1-LLM-Guided-Reachability-Search}
\input{sections/5.2-LLM-Guided-Invariant-Synthesis}

\input{sections/5.3-The-HornCraft-Algorithm}
\input{sections/6-Evaluation}
\input{sections/6.1-Overall-Effectiveness}
\input{sections/6.2-Scalability-on-Deep-and-Branching-Derivations}
\input{sections/6.3-Ablation-of-DPU}
\input{sections/6.4-Limitations}
\input{sections/7-Related-Work}
\input{sections/8-Conclusion}

\bibliographystyle{ACM-Reference-Format}
\bibliography{references}

\end{document}

%% file: sections/0-Abstract.tex
\begin{abstract}
Constrained Horn clause (CHC) solvers often struggle
to discover deep counterexamples
whose derivations may require extensive rule unfolding.
Although large language models (LLMs) can propose long-range symbolic relations,
prior work on formally checked LLM-assisted verification
has mainly focused on generating inductive invariants and concrete traces.
We exploit this broader capability through Derivation
Power Under-approximations (DPUs). Each DPU is a solver-certified
under-approximation of the relation induced by a multi-step CHC derivation.
An LLM proposes each DPU with
symbolic intermediate-state witnesses that enable SMT validation
without introducing additional quantifiers. Validated DPUs can be composed
and reused soundly. Building on this mechanism, our LLM-assisted
framework HornCraft combines invariant generation for over-approximation
with DPU generation for under-approximation in CHC solving.
On 260 linear and nonlinear integer-arithmetic CHC-COMP benchmarks left
unresolved by a 30-minute Spacer run, HornCraft returns certified verdicts
for 228 instances. It certifies 42 instances as unsafe, compared with 22
when DPU construction and reuse are disabled.
\end{abstract}

%% file: sections/1-Introduction.tex
\section{Introduction}

Constrained Horn clauses (CHCs) provide a general intermediate representation
for a wide range of formal verification problems across programming languages and application
domains, including tools for C~\cite{CAV-2015-Seahorn,VMCAI-2022-Korn,CAV-2026-DrvHorn},
Java~\cite{CAV-2016-JayHorn}, Android~\cite{EuroSP-2016-HornDroid},
Rust~\cite{TOPLAS-2021-RustHorn}, and Solidity~\cite{VMCAI-2022-SmartACE,CAV-2022-SolCMC}.
Many solvers have been developed for the CHC satisfiability problem,
including Spacer~\cite{CAV-2014-Spacer}, MuCyc~\cite{PLDI-2024-MuCyc},
Eldarica~\cite{FMCAD-2018-Eldarica}, Golem~\cite{CAV-2023-Golem},
and LoAT~\cite{SAS-2023-ADCL}.
A CHC solver can prove query unreachability by finding an inductive
interpretation, or prove query reachability by applying CHC rules until a
query is reached. Such a finite proof is called a derivation.

For reachable instances, a counterexample derivation may be
deep, requiring many rule applications before reaching
the query~\cite{TACAS-2022-TPA,SAS-2023-ADCL}.
A solver that reasons one rule application at a time has to
traverse the whole derivation,
making deep counterexamples hard to
find and validate. Besides, a derivation may also
branch because one rule application
can depend on multiple reachable predecessors.
Reasoning across several steps at once can avoid this repeated unfolding.
TPA does so by building over-approximate relations for increasingly long
transition sequences~\cite{TACAS-2022-TPA}, whereas LoAT learns
under-approximate relations~\cite{SAS-2023-ADCL}.

On the other hand, LLM-assisted verification has recently been applied to both
proof construction and bug finding.
We follow the guess-and-check paradigm in which
an LLM proposes candidate artifacts and a formal
backend validates them, keeping the LLM outside
the trusted reasoning core.
Prior work in this paradigm has mainly used LLMs to construct
over-approximations of reachable states,
especially inductive invariants, for
unreachability proofs~\cite{FMCAD-2024-Loopy,
ICLR-2024-LEMUR,ASE-2024-LaM4Inv,
ASE-2024-ESBMC-AI,arXiv-2026-CILL,FM-2026-LLM4Ind}.
On the reachability side, recent approaches 
produce concrete inputs or
execution traces as evidence of reachability~\cite{arXiv-2026-Schwarz,
arXiv-2026-AProver}.

However, LLMs are well suited to proposing long-range symbolic relations
that cross many derivation steps. Yet existing guess-and-check interfaces
still require a complete trace for solver validation and thus lack a
direct way to use this ability.

To bridge this gap, we propose Derivation Power Under-approximation (DPU).
A DPU represents a solver-certified subset of feasible multi-step CHC
derivations. It can reduce the number of steps needed to represent a deep
derivation. To obtain this compression within the guess-and-check paradigm, the LLM
proposes candidate DPUs together with symbolic witnesses for their
intermediate states. A solver then validates each proposal through SMT queries
without introducing additional quantifiers. Once certified, DPUs can be
reused to construct new DPUs and combined to
compress longer linear and branching derivations.
In this way, DPUs serve as a formally checkable interface
between an untrusted generative proposer and a trusted verifier.

We develop HornCraft, an LLM-assisted framework for CHC solving.
HornCraft combines LLM-proposed DPU candidates for under-approximating
feasible multi-step derivations with LLM-proposed invariant candidates
for over-approximating reachable states.

Experimental results show that HornCraft certifies 228 of 260 (88\%) linear
and nonlinear integer-arithmetic CHC-COMP instances that time out under a
30-minute Spacer run in Z3 4.17~\cite{TACAS-2008-Z3}. This includes all 35 instances
in the expected-unsafe group and 26 of 40 instances with unknown expected verdicts.
On six parameterized CHC families used to study scalability with derivation
depth and branching, it solves 30 of 37 instances left unresolved by a
30-minute Spacer run. With DPU, HornCraft certifies 42 Spacer-hard instances
as unsafe (vs. 22 without DPU) and solves larger unsafe instances in four
nonlinear families.

Our main contributions are as follows:

1. We introduce witness-carrying DPUs, solver-certified
   under-approximations of multi-step CHC derivations
   that connect LLM proposals with solver validation.
   Symbolic witnesses enable certification without
   introducing additional quantifiers.

2. We give a sound calculus for composing and reusing DPU relations
   over mixed rule applications and branching derivations.

3. We develop HornCraft, an LLM-assisted framework that combines
   LLM-proposed DPU and invariant candidates for reachability
   and unreachability.

%% file: sections/2-Overview.tex
\section{Overview}
\label{sec:overview}

\subsection{Motivating Example}
\label{sec:motivating-example}
Figure~\ref{fig:motivating-example}(a) gives a representation of
the verification task encoded by the CHC-COMP 2026~\cite{CHC-COMP-2026-Results} benchmark
\texttt{s\_split\_15\_000\_001} in Figure~\ref{fig:motivating-example}(b).
The example features a multi-phase loop. During the first 500 iterations,
$p<500$ and $x$ increases by two at each step, reaching 1,000; we call this the
$L$ phase. During the next 500 iterations, $p\geq 500$ and $x$ remains fixed;
we call this the $R$ phase. After the $R$ phase, $p=0$ and
$i=x=1000$, violating the assertion.

\begin{figure}[ht]
  \centering
  \definecolor{DPUBlue}{RGB}{20,91,150}
  \definecolor{CodeKeywordGreen}{RGB}{0,122,92}
  \definecolor{PhaseVar}{RGB}{168,64,130}
  \definecolor{IterVar}{RGB}{0,105,170}
  \definecolor{ValueVar}{RGB}{224,132,0}
  \definecolor{AssertRed}{RGB}{190,45,55}
  \newsavebox{\motivatingProgramBox}
  \newsavebox{\motivatingCHCBox}
  \setlength{\fboxsep}{4pt}
  \begin{lrbox}{\motivatingProgramBox}
    \begin{minipage}{\dimexpr0.47\linewidth-2\fboxsep-2\fboxrule\relax}
      \begin{lstlisting}[
      language=C,
      basicstyle=\ttfamily\scriptsize,
      morekeywords={true},
      keywordstyle=\color{CodeKeywordGreen}\bfseries,
      emph={[1]p},
      emphstyle={[1]\color{PhaseVar}\bfseries},
      emph={[2]i},
      emphstyle={[2]\color{IterVar}\bfseries},
      emph={[3]x},
      emphstyle={[3]\color{ValueVar}\bfseries},
      emph={[4]assert},
      emphstyle={[4]\color{AssertRed}\bfseries},
      numbers=left,
      numberstyle=\tiny\color{black!50},
      numberblanklines=false,
      numbersep=4pt,
      showstringspaces=false,
      frame=none,
      xleftmargin=14pt,
      xrightmargin=0pt,
      aboveskip=0pt,
      belowskip=0pt]
  int p = 0, i = 0, x = 0;
  while (true) {
    assert(!(p == 0 && i == x && i > 0));
    if (p < 500) { x += 2; }
    p = (p + 1) % 1000;
    i++;
  }
\end{lstlisting}
    \end{minipage}
  \end{lrbox}
  \begin{lrbox}{\motivatingCHCBox}
    \begin{minipage}{\dimexpr0.47\linewidth-2\fboxsep-2\fboxrule\relax}
          \centering\scriptsize
          $\begin{aligned}
            \varphi_{r_t} \equiv {}&
              {\color{PhaseVar}p'}=({\color{PhaseVar}p}+1)\bmod 1000
              \land {\color{IterVar}i'}={\color{IterVar}i}+1\\[-0.4mm]
              &{}\land {\color{ValueVar}x'}=\mathsf{ite}({\color{PhaseVar}p}<500,
              {\color{ValueVar}x}+2,{\color{ValueVar}x}),\\[0.6mm]
            r_i:\quad &{\color{PhaseVar}p}=0\land {\color{IterVar}i}=0
              \land {\color{ValueVar}x}=0\Longrightarrow
              \mathit{Inv}({\color{PhaseVar}p},{\color{IterVar}i},{\color{ValueVar}x}),\\[0.6mm]
            r_t:\quad &\varphi_{r_t}\land
              \mathit{Inv}({\color{PhaseVar}p},{\color{IterVar}i},{\color{ValueVar}x})
              \Longrightarrow
              \mathit{Inv}({\color{PhaseVar}p'},{\color{IterVar}i'},
              {\color{ValueVar}x'}),\\[0.6mm]
            r_q:\quad &{\color{PhaseVar}p}=0\land {\color{IterVar}i}={\color{ValueVar}x}
              \land {\color{IterVar}i}>0\land
              \mathit{Inv}({\color{PhaseVar}p},{\color{IterVar}i},{\color{ValueVar}x})
              \Longrightarrow\bot.
          \end{aligned}$
    \end{minipage}
  \end{lrbox}
  \begin{minipage}[t]{0.49\linewidth}
    \vspace{0pt}
    \centering
    \fcolorbox{black!45}{DPUBlue!2}{%
      \usebox{\motivatingProgramBox}\vphantom{\usebox{\motivatingCHCBox}}}
    \par\vspace{3pt}
    {\footnotesize\textbf{(a)}~Verification task}\par
  \end{minipage}
  \hfill
  \begin{minipage}[t]{0.49\linewidth}
    \vspace{0pt}
    \centering
    \fcolorbox{black!45}{DPUBlue!2}{%
      \usebox{\motivatingCHCBox}\vphantom{\usebox{\motivatingProgramBox}}}
    \par\vspace{3pt}
    {\footnotesize\textbf{(b)}~CHC encoding}\par
  \end{minipage}
  \caption{A multi-phase loop and its CHC encoding.}
  \Description{Two side-by-side panels show a loop program on the left and
    its CHC encoding on the right. The encoding consists of an initial fact,
    a transition rule, and a query detecting an assertion violation.}
  \label{fig:motivating-example}
\end{figure}

\begin{figure}[ht]
  \centering
  \includegraphics[width=0.72\linewidth]{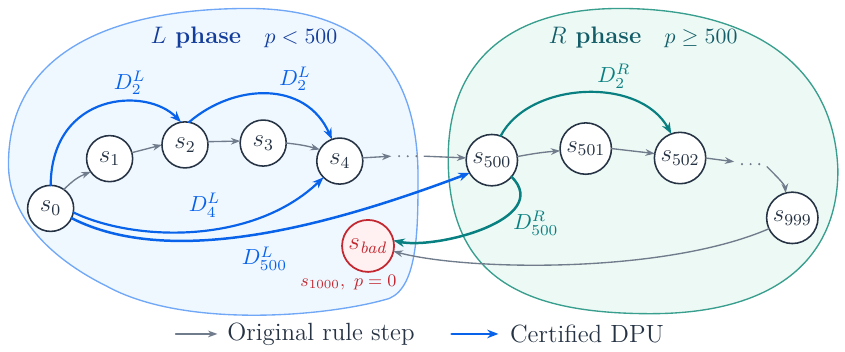}
  \caption{DPU compression in the motivating example.}
  \Description{A two-phase derivation connects the initial state to the bad
    state after 1,000 original transitions. Certified DPUs summarize two,
    four, and 500 transitions, reducing the number of represented steps.}
  \label{fig:linear-dpu-witness}
\end{figure}

Figure~\ref{fig:linear-dpu-witness} shows the 1,000-step counterexample
from the initial state \(s_0\) to the violating state \(s_{1000}\).
Explicitly unfolding it requires 1,000 applications of \(r_t\).
A stepwise search must propagate reachability through all 1,000 applications
of \(r_t\) before the query rule \(r_q\) can be triggered.

We let an LLM propose a DPU, namely a restricted endpoint relation for a
finite CHC derivation, together with witnesses for its hidden states.
An SMT check without new quantifiers certifies that every admitted
endpoint pair expands to a valid sequence of original-rule steps.
Thus, the DPU can compress reachability proofs while preserving soundness.
It may omit valid paths but cannot introduce spurious ones.

For each of the two 500-iteration phases described above, we let the LLM propose
how the state changes across the whole phase instead of listing every
application of $r_t$. Let $s=(p,i,x)$ and $s'=(p',i',x')$ denote the states
before and after a phase, respectively. This gives the following candidate
relations:
\begin{equation}
\begin{aligned}
&D_{500}^L(s,s') \equiv p=0 \land p'=500 \land i'=i+500 \land x'=x+1000,\\
&D_{500}^R(s,s') \equiv p=500 \land p'=0 \land i'=i+500 \land x'=x.
\end{aligned}
\label{eq:motivating-500-step-dpus}
\end{equation}

Assume that both candidates have been certified,
as shown in Fig.~\ref{fig:linear-dpu-witness}.
The following formula checks whether
they connect the initial state to the bad state:
\begin{equation}
s_0=(0,0,0) \land D_{500}^L(s_0,s_{500}) \land D_{500}^R(s_{500},s_{1000})
\land s_{1000}=(0,1000,1000).
\end{equation}

A candidate relation proposed by the LLM can be used as a DPU only after it passes the solver-backed
certification check. Once the two 500-step DPUs have been certified, the two-step SMT
query above can yield a witness to the assertion violation without unfolding 1,000
individual applications of \(r_t\) in the final reachability query. However,
certifying either 500-step relation by unfolding 500 copies of $r_t$ is still a challenge.
Following prior work~\cite{TACAS-2022-TPA}, the framework can first certify shorter DPUs,
such as those spanning 2, 4, or 8 steps (the 2- and 4-step cases are illustrated in
Fig.~\ref{fig:linear-dpu-witness}), and reuse them to certify the 500-step candidates. 
More generally, the framework can also support branching derivations.

\subsection{Framework}

Figure~\ref{fig:framework-overview} summarizes HornCraft. HornCraft first runs
Spacer on the input CHCs for a short time. This initial run reflects the
practical observation that Spacer often resolves instances it can solve
quickly. For unresolved instances, the LLM inspects the CHCs and the certified
banks, then proposes invariant atoms, witness-carrying DPUs,
and reach facts (RFs) with derivations.
The formal engine checks these candidates and allows only those that
pass its checks to cross the trusted boundary and enter the certified banks.

HornCraft then uses the certified banks to search for proofs of query
unreachability and reachability. 
On the reachability side, a DPU
under-approximates the relation induced by a derivation fragment.
The witness check certifies the proposed relation using the supplied
symbolic witnesses.
Certified DPUs can be reused to construct further
DPUs and derive RFs, which record concrete reachable states.
The BMC-like check validates each proposed RF by checking a finite
derivation built from original rules, certified DPUs, and previously
certified RFs. HornCraft uses certified RFs to extend the reachability
search and proves unsafety when they establish reachability of an
original query.
On the unreachability side, Houdini~\cite{FME-2001-Houdini} filters invariant
atoms to obtain inductive invariants that over-approximate reachable states.
HornCraft proves safety when these invariants exclude every original query.

The LLM revises its proposals using the certified banks and verification
feedback, and can request guided Spacer calls using certified invariants
and RFs. If the shared resource budget is exhausted without a proof,
HornCraft returns \texttt{unknown}.
Section~\ref{sec:bi-approximate-solving} details the algorithm.

\begin{figure}[H]
  \centering
  \includegraphics[width=\linewidth]{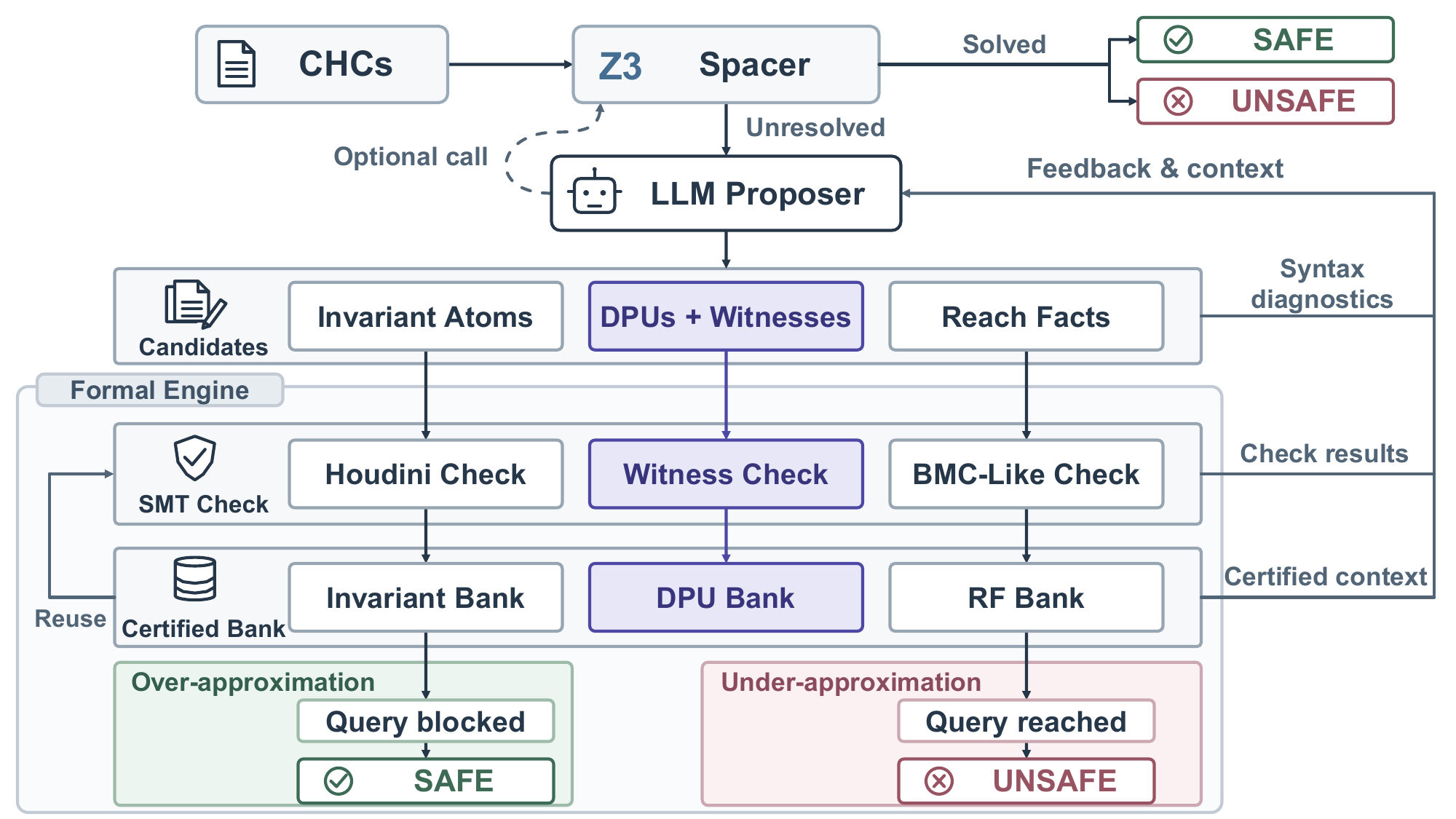}
  \caption{Overview of the HornCraft framework.}
  \Description{HornCraft first runs Spacer on the input CHCs. For unresolved
    instances, an LLM proposes invariant atoms, witness-carrying DPUs, and reach
    facts with derivations. Solver-backed checks admit certified artifacts across
    the trusted boundary for over-approximate safety and under-approximate
    reachability reasoning.}
  \label{fig:framework-overview}
\end{figure}

%% file: sections/3-Preliminaries.tex
\section{Preliminaries}
\label{sec:preliminaries}
We introduce CHCs and their satisfiability problem, whose satisfiable and
unsatisfiable cases are certified by inductive interpretations and finite
CHC derivations, respectively.

\textbf{Constrained Horn clauses (CHCs).}
CHCs are a fragment of first-order logic modulo background theories.
Let $\mathcal{T}$ be a first-order theory and
$\mathcal{P}=\{P_1,\ldots,P_n\}$ be a finite set of uninterpreted relation symbols.
A CHC system $\mathcal{C}$ over $\mathcal{T}$ and $\mathcal{P}$ is a finite set of CHC rules.
Each rule $r\in\mathcal{C}$ is a universally quantified formula
\begin{equation}
    r \;\equiv\;
    \forall \bar{x}_r.\;
    \left(
        \varphi_r
        \land Q_1(\bar{t}_1)
        \land \cdots
        \land Q_m(\bar{t}_m)
        \Longrightarrow H_r
    \right).
    \label{eq:chc}
\end{equation}

Here $m \ge 0$, and $\bar{x}_r$ denotes all variables occurring in $r$.
The quantifier-free $\mathcal{T}$-formula $\varphi_r$ is called the \emph{constraint} of $r$
and contains no relation symbols from $\mathcal{P}$.
For every $1\leq j\leq m$, $Q_j(\bar{t}_j)$ is a relation atom with $Q_j\in\mathcal{P}$,
where $\bar{t}_j$ is a tuple of $\mathcal{T}$-terms matching the arity of $Q_j$.
The relation symbols $Q_1,\ldots,Q_m$ need not be distinct.
$H_r$ is either a relation atom $P_i(\bar{t}_0)$ with $P_i\in\mathcal{P}$ or $\bot$.
We call the antecedent of $r$ its \emph{body} and $H_r$ its \emph{head}.

A rule with no relation atom in its body, i.e., $m=0$, and
one relation atom in its head is called a \emph{fact}.
A rule whose head is $\bot$ is called a \emph{query}.

A rule is \emph{linear} if $m\leq 1$, and \emph{nonlinear} otherwise.
$\mathcal{C}$ is \emph{linear} if every rule in
$\mathcal{C}$ is linear, and \emph{nonlinear} otherwise.
We omit leading universal quantifiers from CHC rules and the background theory $\mathcal{T}$
from the notation when clear from context.

The \emph{CHC satisfiability problem} asks whether $\mathcal{C}$ admits
an interpretation of the relation symbols in $\mathcal{P}$ that satisfies every rule in $\mathcal{C}$.

\textbf{CHC satisfiability as a query-reachability problem.}
For $\mathcal{C}$, this problem can be viewed as a query-reachability problem
over the state spaces associated with its predicates.
For each predicate $P_i$, the state space $\mathcal{S}_{P_i}$ is the Cartesian product of its argument domains.
A tuple $\bar a\in\mathcal{S}_{P_i}$ represents a state associated with $P_i$.
A valuation $\nu$ for a rule $r$ assigns values to the variables in $\bar{x}_r$ and extends to terms in the usual way;
we write $\nu\models\varphi_r$ when the constraint holds under $\nu$.
Facts with head $P_i(\bar t_0)$ generate initial states at $P_i$, whereas other rules with relation-atom heads
propagate reachability from the states represented by their body atoms to a state represented by their head.
For a query with body atoms $Q_1(\bar t_1),\ldots,Q_m(\bar t_m)$, its reachability target is defined
over the product $\mathcal{S}_{Q_1}\times\cdots\times\mathcal{S}_{Q_m}$ of the corresponding state spaces.

For each $P_i\in\mathcal{P}$, let $\mathit{Init}_{P_i}\subseteq\mathcal{S}_{P_i}$ be the set of all states
$\nu(\bar t_0)$ generated by any fact rule $r\in\mathcal{C}$ of the form
$\varphi_r\Longrightarrow P_i(\bar t_0)$, under any valuation $\nu$ for $r$ such that $\nu\models\varphi_r$.
The reachable-state sets $\mathcal{R}_{P_i}\subseteq\mathcal{S}_{P_i}$ form the least family that contains
every $\mathit{Init}_{P_i}$ and is closed under all non-fact, non-query rules.
In particular, consider such a rule $r$ with head $P_i(\bar t_0)$ and body atoms
$Q_1(\bar t_1),\ldots,Q_m(\bar t_m)$.
For every valuation $\nu$ for $r$, if
$\nu\models\varphi_r$ and
$\forall\,1\leq j\leq m.\;
\nu(\bar t_j)\in\mathcal{R}_{Q_j}$,
then $\nu(\bar t_0)\in\mathcal{R}_{P_i}$.

A query is \emph{reachable} if some valuation satisfies its constraint
and instantiates every body atom with a reachable state.
Consequently, $\mathcal{C}$ is \emph{unsafe} (unsatisfiable) iff some
query is reachable; otherwise it is \emph{safe} (satisfiable).

\textbf{Certificates for the query-reachability problem.}
Unreachability certificates use over-approxi\-mations of the reachable-state set
of each predicate, while reachability certificates use under-approximations.

\emph{Unreachability certificate.}
An inductive interpretation $\mathcal{I}$ assigns to every relation symbol
$P_i\in\mathcal{P}$ a background-theory formula $I_{P_i}(\bar x_{P_i})$ of the same arity.
For an atom $A=P_i(\bar t)$, write $\mathcal{I}(A)=I_{P_i}(\bar t)$, and let $\mathcal{I}(\bot)=\bot$.
It is inductive for $\mathcal{C}$ if, for every rule $r\in\mathcal{C}$,
\begin{equation}
    \models
    \varphi_r
    \land
    \bigwedge_{j=1}^{m} I_{Q_j}(\bar{t}_j)
    \Longrightarrow
    \mathcal{I}(H_r).
    \label{eq:inductive-interpretation}
\end{equation}

The formulas $I_{P_i}$ form a family of inductive invariants, one for each relation symbol
$P_i\in\mathcal{P}$.
For every $P_i\in\mathcal{P}$, the set of states satisfying $I_{P_i}$ over-approximates $\mathcal{R}_{P_i}$.
Together, these invariants prove every query unreachable.

\emph{Reachability certificate.}
A CHC derivation is a finite tree of rule applications.
An application of a rule $r$ under a valuation $\nu\models\varphi_r$
derives its instantiated head from its instantiated body atoms.
These body atoms are established at its child nodes in the derivation tree.
Fact applications are leaves of the derivation tree.
A derivation $\delta$ rooted at a query application concluding $\bot$
is a \emph{reachability certificate}. For example, consider the following CHCs:
\begin{equation}
\begin{aligned}
    &x=0\Longrightarrow P_1(x),\quad
    &P_1(x)\land x'=x+1\Longrightarrow P_1(x'),\quad
    &P_1(x)\land x=2\Longrightarrow\bot.
\end{aligned}
\end{equation}

The fact derives $P_1(0)$. Then two applications of the second rule
can derive $P_1(1)$ and $P_1(2)$.
Finally, the query can derive $\bot$ using $P_1(2)$.
These applications form one derivation.
As this example shows, linear CHC derivations are paths,
while nonlinear CHC derivations may branch.

In the derivation $\delta$, the fact leaves and all intermediate nodes derive reachable states.
The root query is therefore reachable, proving $\mathcal{C}$ unsatisfiable.

%% file: sections/4-Compressing-CHC-Derivations.tex
\section{Compressing CHC Derivations}
\label{sec:derivation-compression}

%% file: sections/4.1-Derivation-Power-Under-approximations.tex
\subsection{Derivation Power Under-approximations}
\label{sec:chc-derivations}

A reachability certificate can be highly redundant when every state in its
derivation tree is represented explicitly. It is therefore reasonable to reduce the number of
intermediate states by using sound CHC rules derived from the original $\mathcal{C}$.
For the CHC system in Fig.~\ref{fig:motivating-example}, a certified $D_{500}^L$ compresses
500 applications of $r_t$ into one such rule.

Consider two CHC rules in the form of Eq.~\eqref{eq:chc}, with their variables
renamed apart. If the head of the first rule matches a selected body-atom
occurrence of the second, we compose the two rules by replacing that occurrence
with the body atoms of the first rule and equating the argument tuples of the
matched atoms. The composed rule has the head of the second rule.
For a rule $r$, let $B(r)$ denote its multiset of body atoms and let $H(r)$
denote its head.

We can compose the rules in $\bar r=(r_1,\ldots,r_n)$ in order. At each step
$i>1$, we select a body-atom occurrence $a_i$ of $r_i$ with the same predicate
symbol as the head of the rule composed so far. These selections define the composition plan
$\pi=(a_2,\ldots,a_n)$.

Let $F_i=(B_i^\pi,H_i^\pi,\varphi_i^\pi)$ denote the remaining body atoms, head, and
constraint after composing the first $i$ rules according to $\pi$.
For the first rule,
$F_1=(B(r_1),H(r_1),\varphi_{r_1})$. For $i>1$, let $e_i$ equate the
arguments of $H_{i-1}^\pi$ and $a_i$. Then
\begin{equation}
\begin{aligned}
  B_i^\pi &\;=\; B_{i-1}^\pi\;\uplus\;
      \bigl(B(r_i)\setminus\{a_i\}\bigr),
  & H_i^\pi &\;=\; H(r_i),
  & \varphi_i^\pi \;\equiv\;
      \varphi_{i-1}^\pi\land\varphi_{r_i}\land e_i.
\end{aligned}
\label{eq:composition-interface}
\end{equation}

Here $\uplus$ denotes multiset addition. We write
$B_\pi=B_n^\pi$ and $H_\pi=H_n^\pi$. Let $\bar v$ collect the variables
occurring in $B_\pi$ or $H_\pi$, and let $\bar u$ collect the remaining variables
of $\varphi_n^\pi$. To express the composition over $\bar v$, we existentially
quantify $\bar u$ and obtain the exact composed constraint:
\begin{equation}
  \varphi_\pi(\bar v)\;\equiv\;
  \exists\bar u.\;\varphi_n^\pi(\bar v,\bar u).
\label{eq:composed-fragment-relation}
\end{equation}

Obtaining an equivalent quantifier-free constraint may require expensive
quantifier elimination. To construct a reachability certificate, we can use
sound derived rules that cover selected feasible derivations. We therefore
seek a \emph{Derivation Power Under-approximation (DPU)} of $\varphi_\pi$.
\begin{definition}[Derivation Power Under-approximation]
\label{def:general-dpu}
For a composition plan $\pi$, a relation $D(\bar v)$ is a DPU if every
assignment to $\bar v$ satisfying $D$ also satisfies $\varphi_\pi$.
We write $D\subseteq\varphi_\pi$ for this inclusion.
\end{definition}
A DPU $D$ for the composition plan $\pi$ induces the following CHC rule $r_D$, 
which can be used to compress reachability certificates:
\begin{equation}
  r_D:\quad D(\bar v)\land\bigwedge_{A\in B_\pi}A
       \Longrightarrow H_\pi.
\label{eq:general-dpu}
\end{equation}

%% file: sections/4.2-Witness-Carrying-DPUs-for-Linear-CHCs.tex
\subsection{Witness-Carrying DPUs for Linear CHCs}
\label{sec:linear-dpu}

To certify a candidate relation $D$ proposed by the LLM to be a valid DPU,
we should ensure the inclusion $D\subseteq\varphi_\pi$ in Definition~\ref{def:general-dpu}.
This requires showing that every assignment admitted by $D$ can be extended
with values for the internal variables that satisfy the composed constraints.
We first illustrate this requirement with two composable linear CHC rules:
\begin{equation}
    r_1:\quad P_1(\bar{x})\land\varphi_1(\bar{x},\bar{y})
      \Longrightarrow P_2(\bar{y}),\qquad
    r_2:\quad P_2(\bar{y}')\land\varphi_2(\bar{y}',\bar{z})
      \Longrightarrow P_3(\bar{z}).
\label{eq:linear-rule-composition-example}
\end{equation}

Their exact composed constraint is:
\begin{equation}
  \varphi_{12}(\bar{x},\bar{z})\;\equiv\;
  \exists\bar{y},\bar{y}'.\;
  \bigl(\varphi_1(\bar{x},\bar{y})\land\bar{y}=\bar{y}'
  \land\varphi_2(\bar{y}',\bar{z})\bigr).
\label{eq:linear-composed-constraint}
\end{equation}

For this two-rule composition, we write $D_{12}(\bar x,\bar z)$ for the
candidate relation $D$ and $r_1\circ r_2$ for the exact composed rule
with constraint $\varphi_{12}$.
Since the equality in Eq.~\eqref{eq:linear-composed-constraint}
identifies $\bar y$ and $\bar y'$, we denote their shared value by $\bar u$.
By Definition~\ref{def:general-dpu}, the DPU condition is
\begin{equation}
  \forall \bar{x}, \bar{z}.\;\bigl(
      D_{12}(\bar{x}, \bar{z}) \Longrightarrow \exists \bar{u}.\;
    \bigl(
        \varphi_1(\bar{x},\bar{u}) \land\varphi_2(\bar{u},\bar{z})
    \bigr)
  \bigr).
  \label{eq:linear-dpu-inclusion}
\end{equation}

Checking the implication in Eq.~\eqref{eq:linear-dpu-inclusion}
directly requires reasoning about the
existentially quantified intermediate state $\bar{u}$.
We want to avoid quantifier elimination to keep certification scalable.
Since \(D_{12}\) is intended to describe selected endpoint
pairs of the exact two-rule composition,
we ask the LLM to also provide a symbolic Skolem witness
\(w_{12}(\bar{x},\bar{z})\) for their intermediate states.
\(w_{12}(\bar{x},\bar{z})\) is a tuple of background-theory terms
specifying an intermediate state from the two given endpoints.
It therefore suffices to check the following stronger implication:
\begin{equation}
  \forall \bar{x}, \bar{z}.\;\bigl(
    D_{12}(\bar{x}, \bar{z}) \Longrightarrow
    \varphi_1(\bar{x},w_{12}(\bar{x}, \bar{z}))
    \land\varphi_2(w_{12}(\bar{x}, \bar{z}),\bar{z})
  \bigr).
\label{eq:linear-dpu-check-valid}
\end{equation}

The witness-based check extends directly to the composition plans
defined in Section~\ref{sec:chc-derivations}. For a plan $\pi$,
the LLM supplies a tuple of background-theory terms $w(\bar v)$
that assigns values to all internal variables $\bar u$.
Substituting these terms into the accumulated constraint gives
the sufficient condition
$D(\bar v)\Longrightarrow\varphi_n^\pi(\bar v,w(\bar v))$.
We check this condition by testing the following formula for
unsatisfiability without introducing additional quantifiers:
\begin{equation}
  \mathsf{Cert}(\bar r,\pi,D,w)\;\equiv\;
  D(\bar v)\land\neg\varphi_n^\pi(\bar v,w(\bar v)).
\label{eq:dpu-certification}
\end{equation}

An \texttt{unsat} result proves $D\subseteq\varphi_\pi$.
For a chain of $n$ transition rules of the form illustrated above,
the witnesses specify the $n-1$ intermediate states.

\begin{theorem}[Soundness of Witness-Carrying DPUs]
\label{thm:dpu-certification-soundness}
For a composition plan $\pi$ over $\bar r$, suppose
$\mathcal C\models r_i$ for every source rule $r_i$.
If $\mathsf{Cert}(\bar r,\pi,D,w)$ is \textup{\texttt{unsat}}, then the derived rule $r_D$
in Eq.~\eqref{eq:general-dpu} is a logical consequence of $\mathcal C$.
Consequently, adding $r_D$ to $\mathcal C$ preserves satisfiability.
\end{theorem}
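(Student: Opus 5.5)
The proof has two stages. First, \texttt{unsat} of $\mathsf{Cert}$ gives the inclusion $D\subseteq\varphi_\pi$. Second, an induction on the composition plan shows that the exact composed rule, with constraint $\varphi_n^\pi$, is a consequence of $\mathcal C$. Since $r_D$ has the same body atoms $B_\pi$ and head $H_\pi$ as this composed rule, and only a stronger constraint, it is then a consequence as well.

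\emph{Step 1 (witness to inclusion).} If $D(\bar v)\land\neg\varphi_n^\pi(\bar v,w(\bar v))$ is unsatisfiable in $\mathcal T$, then every valuation $\nu$ of $\bar v$ with $\nu\models D$ also satisfies $\varphi_n^\pi(\bar v,w(\bar v))$. Extend $\nu$ to $\bar u$ by setting $\nu(\bar u)=\nu(w(\bar v))$; these terms mention only $\bar v$, so this is well defined. Then $\nu\models\varphi_n^\pi(\bar v,\bar u)$, hence $\nu\models\exists\bar u.\,\varphi_n^\pi=\varphi_\pi$. This is the inclusion of Definition~\ref{def:general-dpu}.

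\emph{Step 2 (composition is sound).} Let $F_i$ be the partial composition from Eq.~\eqref{eq:composition-interface}. I will prove by induction on $i$ that $\mathcal C\models \forall.\,(\varphi_i^\pi\land\bigwedge B_i^\pi\Rightarrow H_i^\pi)$.
\begin{itemize}
\item The base case $i=1$ is the hypothesis $\mathcal C\models r_1$.
\item For the inductive step, fix any model $M$ of $\mathcal C$ and a valuation satisfying $\varphi_{i}^\pi$ and all atoms of $B_{i}^\pi$. The conjuncts $\varphi_{i-1}^\pi$ and the atoms of $B_{i-1}^\pi$ hold, so the induction hypothesis gives $H_{i-1}^\pi$ in $M$.
\item The equation $e_i$ identifies the arguments of $H_{i-1}^\pi$ with those of $a_i$, so $a_i$ holds in $M$. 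The remaining atoms $B(r_i)\setminus\{a_i\}$ hold by assumption, and so does $\varphi_{r_i}$.
\item Applying $\mathcal C\models r_i$ yields $H(r_i)=H_i^\pi$. Because variables are renamed apart, the valuation restricts consistently to each source rule.
\end{itemize}
The inductive step needs a small case distinction. If $H_{i-1}^\pi=\bot$ then composition cannot continue, since $a_i$ must be a relation atom with the same predicate. So every intermediate head is an atom, and only the final head may be $\bot$.

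\emph{Step 3 (conclude).} Take $M\models\mathcal C$ and a valuation satisfying $D(\bar v)$ and all atoms in $B_\pi$. By Step 1 the extension $\bar u:=w(\bar v)$ satisfies $\varphi_n^\pi$. The atoms of $B_\pi$ and the head $H_\pi$ mention only $\bar v$, so they are unaffected by this extension. Step 2 then gives $H_\pi$ in $M$, so $\mathcal C\models r_D$.

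\emph{Preservation of satisfiability.} Any interpretation satisfying $\mathcal C$ also satisfies $r_D$, so it satisfies $\mathcal C\cup\{r_D\}$. Conversely, removing a rule cannot destroy satisfiability. Hence $\mathcal C$ and $\mathcal C\cup\{r_D\}$ are equisatisfiable.

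The main obstacle is the bookkeeping in Step 2: tracking the variable partition into $\bar v$ and $\bar u$, and checking that the equations $e_i$ correctly transfer the derived head to the selected body occurrence. The logical content is routine; the care lies in the multiset of atoms and in the renaming apart.
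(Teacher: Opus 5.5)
Your proof is correct and follows the same route as the paper's. You fix a model of $\mathcal C$, extend the valuation by $\bar u:=w(\bar v)$ so that the \texttt{unsat} check makes every source constraint and matching equality hold, and then apply the source rules in plan order to obtain $H_\pi$. Your explicit induction on $i$ in Step~2, including the observation that only the final head can be $\bot$, simply makes precise the paper's phrase ``successive applications of the source rules in plan order.''
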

\begin{proof}
Let $\mathcal I\models\mathcal C$, so $\mathcal I$ satisfies every source rule.
For any valuation satisfying the body of $r_D$ under $\mathcal I$,
assign the internal variables $\bar u$ the values of $w(\bar v)$.
The \texttt{unsat} condition ensures that every source constraint and matching
equality in $\pi$ holds under this extended valuation.
The open body atoms in $B_\pi$ remain true, so successive applications
of the source rules in plan order establish $H_\pi$.
Thus $\mathcal I\models r_D$. Since $\mathcal I$ was arbitrary,
$\mathcal C\models r_D$, and $\mathcal C$ and $\mathcal C\cup\{r_D\}$
have the same models.
\end{proof}

We now illustrate this certification using the CHC system in
Figure~\ref{fig:motivating-example}.
Let $s=(p,i,x)$ and $s'=(p',i',x')$ denote the source and destination
states, respectively, and write $r^k$ for the $k$-fold self-composition
$r\circ\cdots\circ r$.
For $\sigma\in\{L,R\}$, let $D_k^\sigma(s,s')$
denote a candidate relation intended to satisfy
$D_k^\sigma(s,s')\subseteq\varphi_{r_t}^{\,k}(s,s')$.

For $k=2$, the LLM proposes
the following two candidate relations, each compressing two applications
of $r_t$:
\begin{equation}
  \begin{aligned}
    D_2^L(s,s')\;\equiv\;{}&
      0\leq p\leq 498
      \land p'=p+2
      \land i'=i+2
      \land x'=x+4,\\
    D_2^R(s,s')\;\equiv\;{}&
      500\leq p\leq 998
      \land p'=(p+2)\bmod 1000
      \land i'=i+2
      \land x'=x.
  \end{aligned}
  \label{eq:motivating-d2l-witness}
\end{equation}

The witnesses for the intermediate states given by the LLM are
$w_L(s,s')=(p+1,i+1,x+2)$ and $w_R(s,s')=(p+1,i+1,x)$.
For $w_L$, substituting the witness into the two copies of the original
transition relation gives
\begin{equation}
  \begin{aligned}
    \varphi_{r_t}(s,w_L)\;\equiv{}&
      p+1=(p+1)\bmod1000\land i+1=i+1\\
      &{}\land x+2=\mathsf{ite}(p<500,x+2,x),\\
    \varphi_{r_t}(w_L,s')\;\equiv{}&
      p'=((p+1)+1)\bmod1000\land i'=(i+1)+1\\
      &{}\land x'=\mathsf{ite}(p+1<500,x+4,x+2).
  \end{aligned}
  \label{eq:linear-dpu-witness-substitution}
\end{equation}

Therefore, the solver checks the following formula for unsatisfiability
to verify $D_2^L$ as a DPU:
\begin{equation}
  \mathsf{unsat}\Bigl(
    D_2^L(s,s')\land
    \neg\bigl(
      \varphi_{r_t}(s,w_L(s,s'))
      \land\varphi_{r_t}(w_L(s,s'),s')
    \bigr)
  \Bigr).
  \label{eq:motivating-d2l-checks}
\end{equation}
The certification for $D_2^R$ is analogous.

Semantically, the guard in $D_2^L$ ensures
that both the source value
$p$ and the intermediate value $p+1$ remain below $500$.
The witness check therefore establishes
$D_2^L\subseteq\varphi_{r_t}^{\,2}$ without
explicitly quantifying over the intermediate state.

For $D_4^L$, the LLM proposes the following candidate relation:
\begin{equation}
  D_4^L(s,s')\;\equiv\;
  0\leq p\leq 496\land p'=p+4\land i'=i+4\land x'=x+8.
  \label{eq:motivating-d4l}
\end{equation}

The LLM provides three witnesses
$w_1(s,s')=(p+1,i+1,x+2)$, $w_2(s,s')=(p+2,i+2,x+4)$, and
$w_3(s,s')=(p+3,i+3,x+6)$ for the hidden intermediate states.
The guard $0\leq p\leq496$ keeps all four applications within the $L$ phase,
so an \texttt{unsat} result for the witness check establishes
$D_4^L\subseteq\varphi_{r_t}^{\,4}$.

In the reachability semantics of
Section~\ref{sec:preliminaries}, either certified two-step
DPU can propagate any reachable source state across
both applications:
$ s\in\mathcal{R}_{\mathit{Inv}} \land D_2^\sigma(s,s')
  \Longrightarrow s'\in\mathcal{R}_{\mathit{Inv}}. $
Figure~\ref{fig:linear-dpu-witness} illustrates this correspondence.
Within the $L$ phase, the original 500-step trace
$s_0\to s_1\to\cdots\to s_{500}$ can therefore be represented using
$D_2^L$ as $s_0\to s_2\to s_4\to\cdots\to s_{500}$, yielding a
250-step derivation. The analogous use of $D_2^R$ compresses the $R$ phase
from 500 steps to 250 steps, reducing the complete two-phase derivation to
500 steps. Using the direct four-step candidates $D_4^L$ and $D_4^R$ instead
reduces each phase to 125 steps, yielding a 250-step derivation overall.

%% file: sections/4.3-Compositional-Construction-of-DPUs.tex
\subsection{Compositional Construction of DPUs}
\label{sec:dpu-composition-reuse}

Due to the monotonicity of relational composition,
if $D_1\subseteq\phi_1$ and $D_2\subseteq\phi_2$, where
$\phi_1$ and $\phi_2$ are the corresponding exact endpoint relations, then
$D_1\circ D_2\subseteq\phi_1\circ\phi_2$.
Consequently, certifying $D_3\subseteq D_1\circ D_2$ can prove
$D_3\subseteq\phi_1\circ\phi_2$.

We therefore can reuse previously certified DPUs to certify new ones.
For a composition plan $\pi$ over $\bar r=(r_1,\ldots,r_n)$, each $r_i$
is either an original rule or a rule induced by a previously certified DPU.
Let $r_D$ be the rule in Eq.~\eqref{eq:general-dpu}.
We define $\mathcal C\vdash r$ by finite applications of the following rules:
\begin{equation}
  \frac{r\in\mathcal C}{\mathcal C\vdash r}
  \quad\textsc{Base}
  \qquad
  \frac{
    \mathcal C\vdash r_i\ (1\leq i\leq n)\qquad
    \mathsf{unsat}\bigl(\mathsf{Cert}(\bar r,\pi,D,w)\bigr)
  }{
    \mathcal C\vdash r_D
  }
  \quad\textsc{DPU-Derive}.
\label{eq:dpu-chc-derivation}
\end{equation}

Induction on the finite derivation of $\mathcal C\vdash r$ shows that
$\mathcal C\models r$. The \textsc{Base} case holds because $r\in\mathcal C$.
For \textsc{DPU-Derive}, the induction hypothesis gives $\mathcal C\models r_i$
for every source rule, and Theorem~\ref{thm:dpu-certification-soundness}
then yields $\mathcal C\models r_D$.

\begin{figure}[ht]
  \centering
  \includegraphics[width=0.60\linewidth]{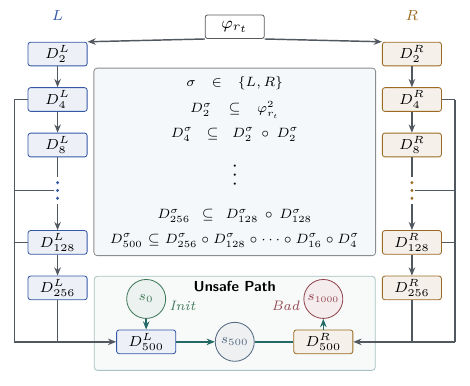}
  \caption{Compositional DPU certification for the motivating example.}
  \Description{Two dependency chains certify increasingly long DPUs for
    the left and right phases. Previously certified powers are combined
    into two 500-step DPUs, which connect the initial state to the bad state.}
  \label{fig:compositional-dpu-certification}
\end{figure}

For the CHC system in Figure~\ref{fig:motivating-example},
Section~\ref{sec:linear-dpu} certified $D_4^L$ directly against four copies
of $r_t$ using three witnesses. We can instead certify the same relation
from two instances of the certified $D_2^L$ in
Eq.~\eqref{eq:motivating-d2l-witness}, using only the midpoint witness
$w(s,s')=(p+2,i+2,x+4)$:
\begin{equation}
  \mathsf{unsat}\Bigl(
    D_4^L(s,s')\land
    \neg\bigl(D_2^L(s,w(s,s'))\land D_2^L(w(s,s'),s')\bigr)
  \Bigr).
  \label{eq:motivating-d4-compositional-check}
\end{equation}

This establishes $D_4^L\subseteq D_2^L\circ D_2^L\subseteq
\varphi_{r_t}^{\,4}$.
We now explain how to repeatedly apply the construction to yield $D_{500}^L$ and
$D_{500}^R$ as shown in Fig.~\ref{fig:compositional-dpu-certification}.
For each $k\in\{2,4,8,\ldots,256\}$, the candidates have the form:
\begin{equation}
  \begin{aligned}
    D_k^L(s,s')\equiv {}& 0\leq p\leq 500-k\land p'=p+k
      \land i'=i+k\land x'=x+2k,\\
    D_k^R(s,s')\equiv {}& 500\leq p\leq 1000-k
      \land p'=(p+k)\bmod 1000 \land i'=i+k\land x'=x.
  \end{aligned}
  \label{eq:motivating-dk-relations}
\end{equation}

The $k=2$ candidates are certified by Eq.~\eqref{eq:motivating-d2l-checks}
and its $R$-phase analogue. For each $k\in\{4,8,\ldots,256\}$, we certify
$D_k^\sigma\subseteq D_{k/2}^\sigma\circ D_{k/2}^\sigma
\subseteq\varphi_{r_t}^{\,k}$ using the symbolic
midpoint witnesses
\begin{equation}
    w_k^L(s,s')=\bigl(p+\frac{k}{2},\,i+\frac{k}{2},\,x+k\bigr), \quad w_k^R(s,s')=\bigl(p+\frac{k}{2},\,i+\frac{k}{2},\,x\bigr).
  \label{eq:motivating-dk-midpoint-witnesses}
\end{equation}

The certified powers are then recombined to obtain $D_{500}^L$ and
$D_{500}^R$, each with five intermediate-state witnesses:
\begin{equation}
    D_{500}^\sigma \subseteq
    D_{256}^\sigma\circ D_{128}^\sigma\circ D_{64}^\sigma
    \circ D_{32}^\sigma\circ D_{16}^\sigma\circ D_4^\sigma
    \subseteq\varphi_{r_t}^{\,500}.
  \label{eq:motivating-d500-composition}
\end{equation}

Together, the certified $D_{500}^L$ and $D_{500}^R$ establish the unsafe
path shown in Figure~\ref{fig:linear-dpu-witness}.

%% file: sections/4.4-DPUs-for-Nonlinear-CHCs-and-Reach-Fact-Reuse.tex
\subsection{DPUs for Nonlinear CHCs and Reach-Fact Reuse}
\label{sec:nonlinear-dpu}
We now consider DPU certification for nonlinear CHCs. The witness-based
check in Eq.~\eqref{eq:dpu-certification} and the \textsc{DPU-Derive} rule
in Eq.~\eqref{eq:dpu-chc-derivation} remain sound for compositions of nonlinear rules.
When the check is \texttt{unsat} and all source rules are logical consequences of
$\mathcal C$, Theorem~\ref{thm:dpu-certification-soundness} still guarantees that
the derived rule is a logical consequence of $\mathcal C$.

Unlike linear DPU compositions, nonlinear compositions can leave multiple
open body atoms. By Eq.~\eqref{eq:composition-interface}, composing two
rules with $m$ and $m'$ body-atom occurrences leaves $m+m'-1$ such atoms
in $B_\pi$. The DPU relates the states at these frontiers to a successor
state, as illustrated by the following two rules:
\begin{equation}
  \begin{aligned}
    r:\;& P_1(a)\land P_2(b)\land\alpha(a,b,p)\Longrightarrow P_3(p),\\
    r':\;& P_3(p')\land P_4(c)\land\beta(p',c,q)\Longrightarrow P_5(q).
  \end{aligned}
  \label{eq:nonlinear-composition-example}
\end{equation}

Composing $r$ and $r'$ with $P_3$ as the pivot imposes $p=p'$, yielding
$\varphi_\pi(a,b,c,q)=\exists p.\,(\alpha(a,b,p)\land\beta(p,c,q))$.
A certified $D\subseteq\varphi_\pi$ induces
$r_D:P_1(a)\land P_2(b)\land P_4(c)\land D(a,b,c,q)\Longrightarrow P_5(q)$.
The resulting rule $r_D$ has three open frontier atoms: $P_1(a)$,
$P_2(b)$, and $P_4(c)$. As composition continues in this way,
the number of open frontiers grows.

To reduce the number of open frontiers,
following GPDR~\cite{SAT-2012-GPDR} and Spacer~\cite{CAV-2014-Spacer},
we record reach facts (RFs).
A point RF $P_i(\bar a)$ records a concrete reachable state
$\bar a\in\mathcal R_{P_i}$ together with its derivation.
Initially, RFs record initial states generated by fact rules.
The LLM can propose a finite derivation from known RFs to a target,
using original rules and certified DPUs. As in bounded model checking,
an SMT solver checks whether the resulting constraints are jointly satisfiable.
A satisfying model establishes the proposed derivation.
If it concludes with a query application deriving $\bot$, the CHC system is unsafe.
Otherwise, the derived relation atom is recorded as a new RF.
A recorded RF $P_i(\bar a)$ can close an open frontier $P_i(\bar t)$
in a nonlinear composition by adding $\bar t=\bar a$ to the constraint.
RFs can also serve as intermediate results and act as initial facts in subsequent derivations.

%% file: sections/5-LLM-Assisted-Bi-Approximate-CHC-Solving.tex
\section{LLM-Assisted Bi-Approximate CHC Solving}
\label{sec:bi-approximate-solving}

We combine witness-carrying DPUs and invariant atoms proposed by an LLM
in a CHC-solving framework called HornCraft, as shown in Figure~\ref{fig:framework-overview}.
Given an input CHC system $\mathcal C$, HornCraft maintains a certified
bank $\mathcal B=(\mathcal D,\mathcal R,\mathcal I)$ of
DPUs, reach facts, and invariants, respectively.
The invariant family $\mathcal I$ satisfies all non-query rules.
The procedures below share a resource budget $\Delta$. We present LLM-guided
reachability search, invariant synthesis, and the overall algorithm below.

%% file: sections/5.1-LLM-Guided-Reachability-Search.tex
\subsection{LLM-Guided Reachability Search}
\label{sec:llm-reachability}

To prove query reachability, HornCraft searches for a valuation that satisfies
a query's constraint and matches each of its body atoms to a certified RF.
The LLM guides this search by proposing DPU candidates
to compress derivations and RF candidates with finite derivations that
may reuse certified DPUs. The LLM bases its proposals on the input CHC
system $\mathcal C$, the certified bank $\mathcal B$, and feedback.
As described in Sections~\ref{sec:dpu-composition-reuse} and~\ref{sec:nonlinear-dpu},
certified DPUs and RFs can be reused in later derivations. The LLM therefore
submits dependent candidates in an ordered batch $\mathcal{K}$ for
certification by Algorithm~\ref{alg:reach-step}.

\begin{algorithm}[htbp]
  \caption{\textsc{ReachStep}: certifying a batch of DPUs and RFs}
  \label{alg:reach-step}
  \small
  \raggedright
  \textbf{Input:} CHC system $\mathcal C$, certified bank $\mathcal B$,
  candidate batch $\mathcal{K}$, budget $\Delta$.\par
  \textbf{Output:} Updated bank $\mathcal B$ and feedback $f$.\par\smallskip
  \renewcommand{\arraystretch}{1.08}
  \begin{tabular}{@{}r@{\hspace{0.8em}}l@{}}
    \algline{ln:reach:feedback-init} & $f\gets\emptyset$ \\
    \algline{ln:reach:loop} & \textbf{for each} $c\in\mathcal{K}$ \textbf{in submission order while} $\Delta$ permits certification \textbf{do} \\
    \algline{ln:reach:ready} & \quad \textbf{if not} $\textsc{Ready}(c,\mathcal C,\mathcal D,\mathcal R)$ \textbf{then} \\
    \algline{ln:reach:reject} & \qquad record rejection in $f$; \textbf{continue} \\
    \algline{ln:reach:ready-end} & \quad \textbf{end if} \\
    \algline{ln:reach:dpu} & \quad \textbf{if} $c$ is a DPU candidate $(\bar r,\pi,D,w)$ \textbf{then} \\
    \algline{ln:reach:compose} & \qquad $(B_\pi,H_\pi,\varphi_n^\pi)\gets\textsc{ComposeRules}(\bar r,\pi)$ \\
    \algline{ln:reach:witness-check} & \qquad $(s,\mu)\gets\textsc{SMT}(D(\bar v)\land\neg\varphi_n^\pi(\bar v,w(\bar v)))$ \\
    \algline{ln:reach:accept-dpu} & \qquad \textbf{if} $s=\texttt{unsat}$ \textbf{then} \\
    \algline{ln:reach:construct-dpu} & \qquad\quad $r_D\gets\bigl(D(\bar v)\land\bigwedge_{A\in B_\pi}A\Longrightarrow H_\pi\bigr)$ \\
    \algline{ln:reach:record-dpu} & \qquad\quad $\textsc{RecordDPU}(\mathcal D,r_D,\bar r,\pi,w)$ \\
    \algline{ln:reach:accept-dpu-end} & \qquad \textbf{end if} \\
    \algline{ln:reach:rf} & \quad \textbf{else} \\
    \algline{ln:reach:encode-rf} & \qquad $\Psi_c\gets\textsc{EncodeRFDerivation}(c,\mathcal C,\mathcal D,\mathcal R)$ \\
    \algline{ln:reach:rf-check} & \qquad $(s,\mu)\gets\textsc{SMT}(\Psi_c)$ \\
    \algline{ln:reach:record-rf} & \qquad \textbf{if} $s=\texttt{sat}$ \textbf{then} $\textsc{RecordRF}(\mathcal R,c,\mu)$ \\
    \algline{ln:reach:candidate-end} & \quad \textbf{end if} \\
    \algline{ln:reach:feedback} & \quad append the check result and available diagnostics to $f$ \\
    \algline{ln:reach:loop-end} & \textbf{end for} \\
    \algline{ln:reach:return} & \textbf{return} $\mathcal B,f$ \\
  \end{tabular}
\end{algorithm}

Each DPU candidate $(\bar r,\pi,D,w)$ specifies source rules, a composition
plan, a candidate relation, and witnesses. \textsc{SMT} checks each formula
within the remaining budget $\Delta$ and returns a status $s$ of \texttt{unsat},
\texttt{unknown}, or \texttt{sat} with a model $\mu$.

The formal engine processes candidates in submission order within the
budget, rejecting those with uncertified dependencies.
A candidate certified earlier can be reused by later candidates
within the same batch (\alglinesref{ln:reach:loop}{ln:reach:ready-end}).

For a DPU candidate, the engine first renames the source variables apart
and then composes the rules according to $\pi$ using Eq.~\eqref{eq:composition-interface}.
It then checks the formula in Eq.~\eqref{eq:dpu-certification} for unsatisfiability
to certify the candidate
(\alglinesref{ln:reach:dpu}{ln:reach:witness-check}). Sources can include certified RFs as ground facts. If the check
returns \texttt{unsat}, the engine adds the certified rule $r_D$ to $\mathcal D$
(\alglinesref{ln:reach:accept-dpu}{ln:reach:record-dpu}). A \texttt{sat} result instead supplies an endpoint valuation
where the proposed witnesses fail. The engine adds this valuation to $f$.

For an RF candidate $c$, the engine encodes its proposed derivation as an
SMT formula $\Psi_c$ (\alglinesref{ln:reach:rf}{ln:reach:rf-check}).
The formula conjoins equalities matching open frontiers to certified RFs,
constraints encoding rule applications, and any target conditions.
On \texttt{sat}, the engine records the concrete root atom $P(\bar a)$ in
$\mathcal R$ together with the satisfying model and the supporting derivation
(\alglineref{ln:reach:record-rf}). On \texttt{unsat}, the engine attempts to
extract an unsat core by tracking the constraints of each derivation step
separately. Any resulting core is added to $f$.
On \texttt{unknown}, the solver's reason is recorded in $f$.

After processing the batch or exhausting the budget, the procedure returns
the updated bank $\mathcal B$ and $f$ for subsequent LLM proposals
(\alglinesref{ln:reach:feedback}{ln:reach:return}).

To close an original query $q$, \textsc{QueryReached} checks whether one
valuation satisfies its constraint and matches every body atom to a
certified RF in $\mathcal R$. A \texttt{sat} result derives $\bot$ and proves
\texttt{unsafe}. The formal engine retains the query valuation and the supporting
RF certificates. Failure to find a matching that passes this check leaves
the query unresolved.

To justify this \texttt{unsafe} verdict, we show that every certified RF
has a finite derivation over the original rules.

\begin{proposition}[Soundness of joint DPU and RF reuse]
\label{prop:joint-dpu-rf-soundness}
Suppose the input bank has a finite acyclic certificate graph, every
input DPU application expands to finitely many original rule applications
with the same head and open premises, and every input RF has a finite
derivation over $\mathcal C$.
Algorithm~\ref{alg:reach-step} preserves these properties.
Consequently, every RF in the updated bank is reachable, and a successful
\textsc{QueryReached} check yields a finite derivation of $\bot$ over the
original rules.
\end{proposition}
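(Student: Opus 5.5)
The plan is to prove the invariant by induction on the number of candidates processed in the loop of Algorithm~\ref{alg:reach-step}, in submission order. The invariant $\mathsf{Inv}(\mathcal B)$ is the conjunction of the three hypotheses: (i) the certificate graph, whose nodes are DPUs and RFs and whose edges point from each artifact to the artifacts it was certified from, is finite and acyclic; (ii) every DPU rule $r_D\in\mathcal D$ expands to a finite tree of original-rule applications with head $H_\pi$ and open premises $B_\pi$; (iii) every RF in $\mathcal R$ has a finite derivation over $\mathcal C$. The base case is the hypothesis on the input bank. Rejected candidates, candidates whose check returns \texttt{sat} or \texttt{unknown} on the DPU side, and candidates whose check returns \texttt{unsat} or \texttt{unknown} on the RF side leave $\mathcal B$ unchanged. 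Only \textsc{RecordDPU} and \textsc{RecordRF} need to be examined.

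\emph{Acyclicity.} \textsc{Ready} admits a candidate only if all of its dependencies already lie in $\mathcal D\cup\mathcal R$. The new node therefore has edges only to existing nodes, and adding a fresh sink-free node with outgoing edges into an acyclic finite graph keeps it finite and acyclic.

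\emph{DPU case.} Suppose $\mathsf{Cert}(\bar r,\pi,D,w)$ is \texttt{unsat}, and take any valuation $\nu$ of $\bar v$ with $\nu\models D$. Extend $\nu$ by setting $\bar u:=w(\bar v)$; by unsatisfiability this extension satisfies $\varphi_n^\pi$, so every source constraint and every matching equality $e_i$ holds. We then build the expansion along the plan: the application of $r_i$ is plugged into the occurrence $a_{i+1}$ of $r_{i+1}$, and $e_{i+1}$ guarantees that the head instance matches. This gives a tree whose leaves are exactly $B_\pi$ and whose root is $H_\pi$. Each source $r_i$ is handled according to its kind. If it is original, it stays as is. If it is a previously certified DPU, we substitute its finite expansion by (ii); the source constraint $D_i$ holds under the extended valuation, which is exactly what that expansion requires. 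If it is an RF used as a ground fact, we substitute its finite derivation by (iii). Because the graph is acyclic, this substitution terminates: formally, we carry out an inner induction on the well-founded height of nodes in the certificate graph. This also covers RFs that close frontiers, since the equation $\bar t=\bar a$ lets us graft the RF's derivation at the frontier.

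\emph{RF case.} A \texttt{sat} model $\mu$ of $\Psi_c$ gives a valuation for every step of the proposed derivation, and each step applies an original rule, a certified DPU, or an RF. Replacing each DPU step by its expansion (ii), instantiated at $\mu$, and each leaf RF by its derivation (iii) yields a finite derivation over $\mathcal C$ of the recorded atom $P(\bar a)$. Reachability of every RF then follows from the preliminaries: nodes of a finite derivation from facts derive states in $\mathcal R_P$. For \textsc{QueryReached}, a satisfying valuation of the query constraint whose body atoms are matched to RFs is a query application whose children we replace by the RF derivations. This gives a finite derivation of $\bot$ over the original rules.

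I expect the main obstacle to be the expansion step for nested DPUs. We must check that the valuation used inside the expansion of a reused DPU $r_{D_i}$ is consistent with the outer one, so that its internal variables are renamed apart and instantiated by its own witness. We must also check that the open premises are preserved exactly as multisets under \uplus. I would handle this by making (ii) quantify over all valuations satisfying $D_i$, so that the expansion is parametric in the endpoint valuation. I would also rename each inlined copy freshly, which is exactly what \textsc{ComposeRules} does.
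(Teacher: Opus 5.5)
Your proposal is correct and follows essentially the paper's own argument. The paper likewise uses a joint induction over DPU and RF admissions in certification order with the input bank as the base case, obtains acyclicity from \textsc{Ready}, handles the DPU case by the witness construction of Theorem~\ref{thm:dpu-certification-soundness} together with the induction hypothesis, handles the RF case by instantiating the satisfying model, and grafts the RF derivations onto the query application validated by \textsc{QueryReached}. Your extra inner induction on certificate-graph height is harmless but redundant, since the outer invariant (ii) already gives expansions of source DPUs entirely over original rules.
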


\begin{proof}
Induct jointly on DPU and RF admissions in certification order, with the
input bank as the base case. \textsc{Ready} restricts sources to original
rules and previously certified DPUs or RFs, so each admission preserves
acyclicity.

For a DPU, the witness construction in the proof of
Theorem~\ref{thm:dpu-certification-soundness} expands each application into
its source rule applications. By the induction hypothesis, expanding the
DPU and RF sources yields finitely many original rule applications with
the same head and open premises.

For an RF, the satisfying model instantiates its proposed derivation.
Expanding its certified sources by the induction hypothesis yields a finite
derivation of the recorded atom over $\mathcal C$.
Finally, attaching these RF derivations to the query application validated
by \textsc{QueryReached} yields a finite derivation of $\bot$ over $\mathcal C$.
\end{proof}

%% file: sections/5.2-LLM-Guided-Invariant-Synthesis.tex
\subsection{LLM-Guided Invariant Synthesis}
\label{sec:llm-invariants}

To prove query unreachability, HornCraft follows prior work on LLM-guided
invariant generation~\cite{ICLR-2024-LEMUR,ASE-2024-ESBMC-AI,ASE-2024-LaM4Inv,ISSTA-2025-Clause2Inv,FM-2026-LLM4Ind,arXiv-2026-CILL}.
As in Loopy~\cite{FMCAD-2024-Loopy}, we use Houdini~\cite{FME-2001-Houdini}
to filter candidate atoms.

Beyond proposing invariant atoms, the LLM may also request a short guided
Spacer call. HornCraft passes the certified invariants and RFs in the current
bank to Spacer as guidance. Following CIll's reuse of IC3's infinite
frame~\cite{arXiv-2026-CILL}, HornCraft uses Spacer's infinity-level lemmas
to strengthen its invariant bank. If the call leaves the query unresolved,
the formal engine splits the infinity-level lemmas exported by Spacer for
each predicate into invariant atoms and adds these atoms with their solver
evidence to $\mathcal I$ for the next LLM turn.
The LLM proposes a finite set $A_P$ of invariant atoms for each predicate
$P$. We write $A=(A_P)_{P\in\mathcal P}$.

The \textsc{InvariantStep} procedure certifies invariants for proving
query unreachability. Its candidate family $\mathcal J$ consists of
$J_P=I_P\land\bigwedge_{a\in A_P}a$ for each predicate $P$.
For each original non-query rule, the SMT check uses the current candidate
family $\mathcal J$ in the body and tests whether the newly proposed atoms satisfy
the inductive obligations in Eq.~\eqref{eq:inductive-interpretation}.
On \texttt{sat}, it removes the new atoms falsified at the rule's head
and repeats the checks on the weakened family.
If all checks return \texttt{unsat}, $\mathcal J$ replaces $\mathcal I$.
An \texttt{unknown} result or an exhausted budget leaves the previously
certified family unchanged. The procedure returns $\mathcal I$ and
feedback $f$ containing the check results and available diagnostics.

\textsc{QueryExcluded} checks
$\varphi_q\land\bigwedge_j I_{Q_j}(\bar t_j)$ for every original query $q$.
If all checks return \texttt{unsat}, the certified invariants exclude
every query and prove \texttt{safe}.
Models from \texttt{sat} query checks are also recorded in $f$.

%% file: sections/5.3-The-HornCraft-Algorithm.tex
\subsection{The HornCraft Algorithm}

Algorithm~\ref{alg:horncraft} shows HornCraft's overall workflow.
For instances that Spacer does not solve within 10 seconds, HornCraft
starts the LLM-guided search loop. At each turn, the LLM receives the CHCs,
the current bank, and feedback
from previous checks. It may then propose a mixed batch of invariant atoms,
DPU candidates, and RF candidates with finite derivations, or request a
guided Spacer call
(\alglinesref{ln:horncraft:loop}{ln:horncraft:action}).

\begin{algorithm}[htbp]
  \caption{HornCraft's LLM-guided certification loop}
  \label{alg:horncraft}
  \small
  \raggedright
  \textbf{Input:} CHC system $\mathcal C$ and shared resource budget $\Delta$.\par
  \textbf{Output:} \texttt{safe}, \texttt{unsafe}, or \texttt{unknown}.\par\smallskip
  \renewcommand{\arraystretch}{1.08}
  \begin{tabular}{@{}r@{\hspace{0.8em}}l@{}}
    \algline{ln:horncraft:initialize} & $\mathcal B\gets\bigl(\emptyset,\emptyset,(\top)_{P\in\mathcal P}\bigr)$; $f\gets\emptyset$ \\
    \algline{ln:horncraft:initial-spacer-run} & $(v,\mathcal B,f)\gets\textsc{RunSpacer}(\mathcal C,\mathcal B,\Delta)$ \\
    \algline{ln:horncraft:initial-verdict} & \textbf{if} $v\ne\texttt{unknown}$ \textbf{then return} $v$ \\
    \algline{ln:horncraft:loop} & \textbf{while} $\Delta$ permits interaction \textbf{do} \\
    \algline{ln:horncraft:action} & \quad $a\gets\textsc{NextLLMAction}(\mathcal C,\mathcal B,f,\Delta)$ \\
    \algline{ln:horncraft:batch} & \quad \textbf{if} $a$ is a mixed candidate batch $(\mathcal{K},A)$ \textbf{then} \\
    \algline{ln:horncraft:certify-batch} & \qquad $(\mathcal B,f)\gets\textsc{CertifyBatch}(\mathcal C,\mathcal B,\mathcal{K},A,f,\Delta)$ \\
    \algline{ln:horncraft:query-reached} & \qquad \textbf{if} $\textsc{QueryReached}(\mathcal C,\mathcal R,\Delta)$ \textbf{then return} \texttt{unsafe} \\
    \algline{ln:horncraft:query-excluded} & \qquad \textbf{if} $\textsc{QueryExcluded}(\mathcal C,\mathcal I,\Delta)$ \textbf{then return} \texttt{safe} \\
    \algline{ln:horncraft:guided-request} & \quad \textbf{else} \hspace{0.5em} $(v,\mathcal B,f)\gets\textsc{ApplyGuidance}(\mathcal C,\mathcal B,f,a,\Delta)$ \\
    \algline{ln:horncraft:guided-verdict} & \qquad \textbf{if} $v\ne\texttt{unknown}$ \textbf{then return} $v$ \\
    \algline{ln:horncraft:action-end} & \quad \textbf{end if} \\
    \algline{ln:horncraft:loop-end} & \textbf{end while} \\
    \algline{ln:horncraft:return} & \textbf{return} \texttt{unknown} \\
  \end{tabular}
\end{algorithm}

If the action is a mixed batch, \textsc{CertifyBatch} applies
\textsc{ReachStep} to DPU/RF candidates and \textsc{InvariantStep} to
invariant atoms. It records their check results and diagnostics in $f$
(\alglinesref{ln:horncraft:batch}{ln:horncraft:certify-batch}). Either part
of the batch may be empty. The updated bank is then used to check the
original queries. \textsc{QueryReached} returns \texttt{unsafe} when
certified RFs close a query, while \textsc{QueryExcluded} returns
\texttt{safe} when the certified invariants exclude every query
(\alglinesref{ln:horncraft:query-reached}{ln:horncraft:query-excluded}).
If neither check returns a verdict, their feedback guides the next LLM
proposal.

If the LLM requests a guided Spacer call, \textsc{ApplyGuidance} invokes the
call described in Section~\ref{sec:llm-invariants}
(\alglinesref{ln:horncraft:guided-request}{ln:horncraft:guided-verdict}).
The loop continues until a verdict is returned or the interaction budget is
exhausted. If no verdict is returned, HornCraft returns \texttt{unknown}.
\FloatBarrier

%% file: sections/6-Evaluation.tex
\section{Evaluation}
\label{sec:evaluation}

Our evaluation addresses the following research questions.
\par\smallskip
\noindent\textbf{RQ1.} How effective is HornCraft at solving hard
linear and nonlinear CHC instances?\par
\noindent\textbf{RQ2.} How does HornCraft scale with the size of deep and
branching CHC derivations?\par
\noindent\textbf{RQ3.} How much do DPUs contribute to
HornCraft's solving coverage and scalability?
\par\smallskip

\phantomsection
\label{sec:eval-setup}
\noindent\textbf{Experimental Setup.}
We implemented HornCraft in Python, using Z3 4.17.0 for SMT solving
and its Spacer engine for CHC solving. For the LLM-guided search, we
used \texttt{gpt-5.6-sol} through Codex CLI 0.154.0, with reasoning
effort set to \texttt{high}. We ran five benchmark instances concurrently
on a server with two AMD EPYC 7H12 processors and 1 TiB of RAM.

For RQ1, we start from the complete benchmark sets of the CHC-COMP 2026
LIA-Lin and LIA tracks. We select all instances left unresolved by
Spacer (Z3 4.16.0) in the published results~\cite{CHC-COMP-2026-Results},
yielding 220 LIA-Lin and 64 LIA instances. Since Spacer is sensitive
to solver heuristics~\cite{CAV-2020-GSpacer}, we rescreen all 284 candidates
using standalone Spacer with HornCraft's Z3 4.17.0 and solver
configuration.\footnote{HornCraft disables relation-rewriting transformations
to preserve predicates for guidance injection. The rescreening excludes
20 solved instances and 4 execution errors.}
Each screening run has a 30-minute time limit and a 30 GB address-space
limit. We retain only the instances that time out, yielding a
Spacer-hard evaluation set of 202 LIA-Lin and 58 LIA instances.

For RQ2, we manually develop and adapt six parameterized CHC encodings
to evaluate scalability on deep and branching derivations. We instantiate
these encodings at different parameter values to obtain 49 benchmarks,
collectively called the Depth-and-Branching CHC Suite.
Section~\ref{sec:eval-branching} describes their sources, encodings, and
parameter choices.

For RQ3, we disable DPU construction and reuse and rerun HornCraft on
both benchmark sets to evaluate the contribution of DPUs to solving
coverage and scalability.

All configurations use a 30-minute end-to-end time limit per instance
and a 30 GB memory limit. These budget values match
those used in CHC-COMP 2026~\cite{CHC-COMP-2026-Results}.
After solving, we replay the certificates produced by the formal engine
to recheck their proof steps. Replay time is excluded from the 30-minute
time limit.

%% file: sections/6.1-Overall-Effectiveness.tex
\subsection{RQ1: Overall Effectiveness}
\label{sec:eval-effectiveness}
\paragraph{Overall results.}
On the Spacer-hard CHC-COMP benchmarks, HornCraft solves 228 of 260
instances (88\%), including 186 of 202 LIA-Lin instances and 42 of 58
LIA instances. It proves query reachability for all 35 instances with an
expected unsafe verdict. Among the 40 instances with unknown expected
verdicts, HornCraft returns 19 safe and 7 unsafe results.
Table~\ref{tab:rq1-effectiveness} summarizes the results by track and
expected verdict.\footnote{CHC-COMP 2026 labels
\texttt{aeval-unsafe/s\_split\_50\_000.smt2} as safe based solely on
LoAT's result. HornCraft produces
an unsafe certificate that passes replay, so we count this instance
in the expected-unsafe group.}
Tables~\ref{tab:rq1-effectiveness} and~\ref{tab:rq1-joint-hard} report
solved/total (percentage), with S and U denoting safe and unsafe results.
HornCraft thus achieves broad coverage on hard CHCs, including instances
with previously unknown verdicts. The coverage difference between tracks
is most pronounced among instances with unknown expected verdicts.

\begin{table}[H]
  \caption{HornCraft on Spacer-hard CHC-COMP instances.}
  \label{tab:rq1-effectiveness}
  \centering
  \small
  \setlength{\tabcolsep}{4pt}
  \renewcommand{\arraystretch}{1.08}
  \begin{tabular*}{\linewidth}{@{\extracolsep{\fill}}lllll@{}}
    \toprule
    Track & \multicolumn{1}{c}{Expected safe}
      & \multicolumn{1}{c}{Expected unsafe}
      & \multicolumn{1}{c}{Expected unknown}
      & \multicolumn{1}{c@{}}{Overall} \\
    \midrule
    LIA-Lin
      & \evalCount{137/149} (92\%)
      & \evalCount{32/32} (100\%)
      & \evalCount{17/21} (81\%; \evalOutcomes{11}{6})
      & \evalCount{186/202} (92\%) \\
    LIA
      & \evalCount{30/36} (83\%)
      & \evalCount{3/3} (100\%)
      & \evalCount{9/19} (47\%; \evalOutcomes{8}{1})
      & \evalCount{42/58} (72\%) \\
    \midrule
    Total
      & \evalCount{167/185} (90\%)
      & \evalCount{35/35} (100\%)
      & \evalCount{26/40} (65\%; \evalOutcomes{19}{7})
      & \evalCount{228/260} (88\%) \\
    \bottomrule
  \end{tabular*}
\end{table}

\paragraph{Hard instances for other traditional solvers.}
Table~\ref{tab:rq1-joint-hard} reports HornCraft's coverage on instances
unresolved by other CHC-COMP 2026 solvers.\footnote{Results for LoAT, Golem,
and Eldarica are taken from the published CHC-COMP 2026
records~\cite{CHC-COMP-2026-Results}.}
On our Spacer-hard LIA-Lin instances, HornCraft solves 45 of 52 instances
unresolved by LoAT and 103 of 118 unresolved by Golem. On LIA, it solves
20 of 35 instances unresolved by Eldarica.
HornCraft thus provides complementary coverage on instances unresolved
by both Spacer and another competition solver.

\begin{table}[H]
  \caption{HornCraft on instances unresolved by other solvers in CHC-COMP 2026.}
  \label{tab:rq1-joint-hard}
  \centering
  \small
  \setlength{\tabcolsep}{3pt}
  \renewcommand{\arraystretch}{1.08}
  \begin{tabular*}{\linewidth}{@{\extracolsep{\fill}}lrllll@{}}
    \toprule
    Solver and track & Cases & \multicolumn{1}{c}{Expected safe}
      & \multicolumn{1}{c}{Expected unsafe}
      & \multicolumn{1}{c}{Expected unknown}
      & \multicolumn{1}{c@{}}{Overall} \\
    \midrule
    LoAT--LIA-Lin
      & 52 & \evalCount{18/21} (86\%)
      & \evalCount{10/10} (100\%)
      & \evalCount{17/21} (81\%; \evalOutcomes{11}{6})
      & \evalCount{45/52} (87\%) \\
    Golem--LIA-Lin
      & 118 & \evalCount{80/91} (88\%)
      & \evalCount{6/6} (100\%)
      & \evalCount{17/21} (81\%; \evalOutcomes{11}{6})
      & \evalCount{103/118} (87\%) \\
    Eldarica--LIA
      & 35 & \evalCount{10/15} (67\%)
      & \evalCount{1/1} (100\%)
      & \evalCount{9/19} (47\%; \evalOutcomes{8}{1})
      & \evalCount{20/35} (57\%) \\
    \bottomrule
  \end{tabular*}
\end{table}

\paragraph{Token usage.}
Figure~\ref{fig:rq1-token-distribution} reports token use for all 260 runs,
including unresolved cases. Input includes cached tokens and output includes
reasoning tokens. Median input and output counts are 97.3k and 4.5k for
LIA-Lin, and 1.94M and 21.3k for LIA. Median per-instance input cache hit
rates are 85.4\% and 94.2\%, respectively. Cached tokens account for
94.0\% and 96.6\% of total input in the two tracks, or 95.8\% overall.
LIA is more token-intensive, although most input tokens in both tracks
are served from cache.

\begin{figure}[htbp]
  \centering
  \includegraphics[width=0.88\linewidth]{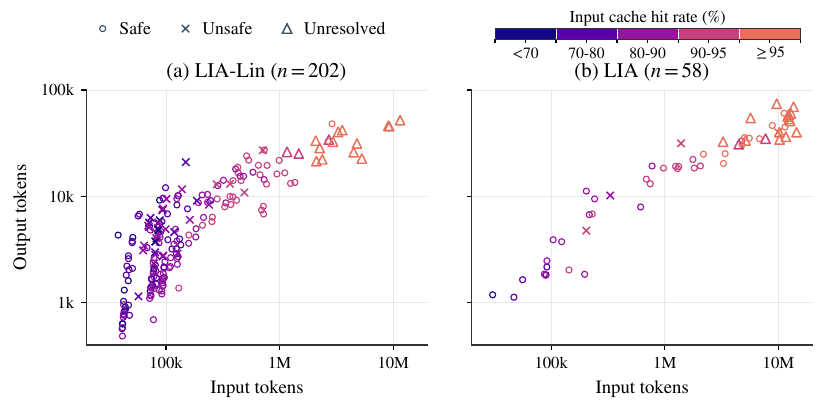}
  \caption{Input and output token use.}
  \label{fig:rq1-token-distribution}
  \Description{Two scatter panels show 202 LIA-Lin executions on the left
    and 58 LIA executions on the right.
    Each point pairs input and output tokens on identical base-10 logarithmic
    axes. Open circles denote safe outcomes, crosses unsafe outcomes,
    and open triangles unresolved outcomes. Five input cache hit bands,
    below 70, 70 to 80, 80 to 90, 90 to 95, and at least 95 percent,
    progress from cool dark blue through purple and magenta to warm orange. LIA-Lin has many
    low-token safe and unsafe runs. Unresolved runs tend toward higher token
    use. Some closely neighboring points overlap.}
\end{figure}

\paragraph{Execution time.}
Figure~\ref{fig:rq1-time-breakdown} breaks down elapsed time for all 260
runs, including unresolved cases. Among solved instances, median
end-to-end times are 105.1 seconds for LIA-Lin and 384.3 seconds for LIA,
both well within the 30-minute budget. Across all runs, median
per-instance Agent shares are 86\% and 88\%, respectively. Agent
interaction thus accounts for most elapsed time in both tracks. In some
long runs, the formal engine also takes up much of the time.
Bars are ordered by total time within each track, excluding certificate
replay. Agent includes communication and waiting. Formal Engine covers
Spacer calls and certificate processing.
The dashed line marks the 30-minute budget.

\begin{figure}[htbp]
  \centering
  \includegraphics[width=\linewidth]{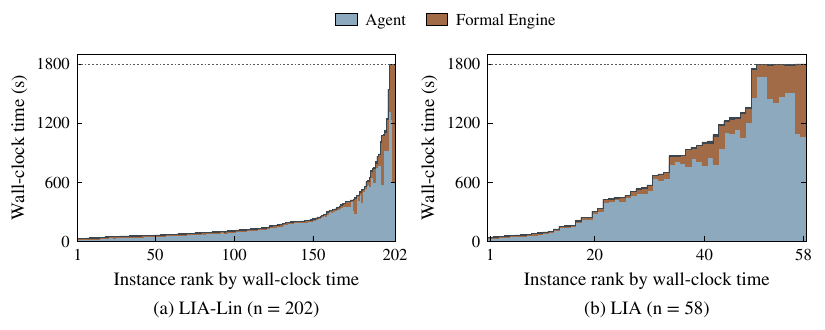}
  \caption{Execution-time breakdown.}
  \label{fig:rq1-time-breakdown}
  \Description{Two columns show all 202 LIA-Lin and 58 LIA Full executions,
    including unresolved runs, sorted within each track by increasing wall-clock
    time. Each stacked bar gives the elapsed seconds of one execution.
    Light blue, ochre, and dark gray distinguish Agent, Formal Engine, and Other,
    respectively, in a fixed bottom-to-top order. Agent interaction accounts
    for most elapsed time in most runs, while some long runs have substantial
    formal-engine processing. The dashed line marks the 1800-second budget.}
\end{figure}
\FloatBarrier

%% file: sections/6.2-Scalability-on-Deep-and-Branching-Derivations.tex
\subsection{RQ2: Scalability on Deep and Branching Derivations}
\label{sec:eval-branching}
\paragraph{Benchmark sources.}
The Depth-and-Branching CHC Suite contains one linear and five nonlinear
families. We adapt Ackermann from existing CHCs in the HOPV benchmark
collection~\cite{HOPVBenchmarks}, preserving its recursive clauses.
The remaining families are our CHC encodings of Hanoi's legal moves,
the Chinese Rings rules~\cite{HeefferHinz2017ChineseRings},
the binomial recurrence~\cite{DLMFBinomialRecurrence} (Pascal), and
Takeuchi's triple recursion~\cite{Knuth1991Recursion} (Tarai).

\paragraph{Encodings and parameters.}
For both Hanoi encodings and Chinese Rings, $n$ denotes the number of
disks and rings, respectively.
The linear encoding derives reachable configurations one move at a time.
The two nonlinear encodings use a predicate that relates two configurations
if one can reach the other through legal moves. Starting from single
moves, a rule builds a longer path by joining two paths when the first
ends where the second begins.
The recursive queries are Ackermann$(3,n) \geq 55$,
$\binom{n}{2}>0$, and Tarai$(n,1,0)>0$.
The nonlinear encodings have at most two predicate premises per rule,
except Tarai, which has four. All solver configurations use the full
suite, including instances solved by Spacer.

\paragraph{Overall results.}
HornCraft solves 42 of the 49 instances, including 30 of the 37 left
unresolved by the 30-minute Spacer baseline.
The solved range extends beyond Spacer's in all six families
(Table~\ref{tab:rq2-families}).
Full is the complete HornCraft solver. Without DPU disables DPU
construction and reuse.
Each configuration also solves every smaller tested $n$.
Tokens include all input and output. Parentheses give the cached fraction of input.

\begin{table}[H]
  \caption{Maximum solved $n$ across benchmark families.}
  \label{tab:rq2-families}
  \centering
  \small
  \setlength{\tabcolsep}{4pt}
  \renewcommand{\arraystretch}{1.08}
  \begin{tabular*}{\linewidth}{@{\extracolsep{\fill}}lrrrrrr@{}}
    \toprule
    & Tested $n$ & \multicolumn{3}{c}{Maximum solved $n$} & \multicolumn{2}{c}{Full at maximum $n$} \\[-1pt]
    \cmidrule(lr){3-5} \cmidrule(lr){6-7}
    \noalign{\vskip -1.5pt}
    Family & (instances) & Spacer & Without DPU & Full & Time (s) & Tokens (M) \\
    \midrule
    Hanoi (linear) & 4--20 (9) & 4 & 12 & 20 & 207.5 & 0.63 (93\%) \\
    Hanoi (path) & 4--20 (9) & 4 & 12 & 20 & 238.3 & 0.59 (91\%) \\
    Ackermann & 3--10 (8) & 4 & 6 & 8 & 841.3 & 3.34 (96\%) \\
    Chinese Rings & 4--20 (9) & 6 & 12 & 20 & 682.5 & 3.24 (95\%) \\
    Pascal & 10--10,000 (7) & 100 & 300 & 1,000 & 601.5 & 2.38 (96\%) \\
    Tarai & 2--300 (7) & 10 & 20 & 20 & 464.8 & 2.43 (96\%) \\
    \bottomrule
  \end{tabular*}
\end{table}

\paragraph{Execution time.}
Figure~\ref{fig:rq2-family-scalability} compares solving times as $n$
increases. For both Hanoi encodings and Chinese Rings, Full takes less
time at the largest parameter solved by Without DPU and continues to
solve larger tested parameters. In Ackermann and Pascal, Full takes
longer at the largest parameter solved by both configurations but
extends the solved range. Tarai shows similar times and the same solved
range with and without DPU.
The shared time axis breaks above 1200\,s. Markers on the 1800\,s budget
line identify the first unsolved $n$, with all larger tested $n$ also
unsolved. These markers denote outcomes, not measured times. Pascal and Tarai use
logarithmic $n$ axes.

\begin{figure}[htbp]
  \centering
  \includegraphics[width=0.94\linewidth]{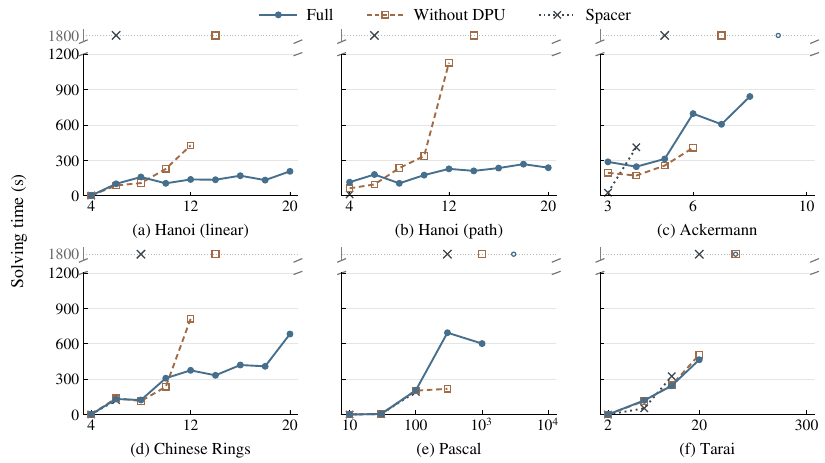}
  \caption{Scalability across the six CHC families.}
  \label{fig:rq2-family-scalability}
  \Description{Six panels in two rows show solving time against parameter n
    for Hanoi linear, Hanoi path, Ackermann, Chinese Rings, Pascal, and Tarai.
    All panels share a linear runtime scale from 0 to 1200 seconds, with an
    explicit axis break before a separate 1800-second budget line. Pascal and Tarai use
    logarithmic parameter axes. Blue circles represent Full, ochre squares
    represent Without DPU, and gray crosses represent Spacer. Curves connect
    solved runs. A marker on the dotted budget line identifies the first
    unsolved tested parameter for each configuration; all larger tested
    parameters are also unsolved. Unsolved includes unresolved runs, timeouts,
    and errors. These markers denote status, not measured runtimes.
    Nested outlines distinguish coincident boundaries.
    Each panel spans its family's full tested range.
    Full reaches larger parameters than Without DPU in five families.
    Full has lower time at the largest shared solved parameter in the two
    Hanoi encodings and Chinese Rings, and higher time there in Ackermann
    and Pascal. Both configurations reach n equal to 20 on Tarai with
    similar solving times.}
\end{figure}

\paragraph{Derivation compression.}
Following the refutation-length comparison in ADCL~\cite{SAS-2023-ADCL},
Table~\ref{tab:rq2-compression} compares compact and expanded proof sizes
for the largest solved instance in each nonlinear family. The compact
count sums the source steps in each distinct evidence entry used by the
final proof. The expanded count includes every occurrence of an original
initialization or transition clause after expanding all DPU and RF
references. Both proof counts exclude the final query. The last column
gives the largest expanded span of a DPU used by the final proof.

For Hanoi (path), 82 source uses in the compact proof represent more than
2 million original clause uses. For Chinese Rings, the corresponding
counts are 213 and about 1.4 million. In both cases, the largest used DPU
represents almost the entire expanded derivation, allowing a long
derivation to be reused through a single certified relation.
Tarai's proof contains no DPU. It relies on RF reuse to represent
$5.60\times 10^{14}$ original clause uses with 3,478 compact source uses.

These results show that DPUs can compress long derivations, while Tarai
demonstrates that RF reuse can also achieve substantial compression
without DPUs.

\begin{table}[H]
  \caption{Derivation compression on selected nonlinear instances.}
  \label{tab:rq2-compression}
  \centering
  \small
  \setlength{\tabcolsep}{4pt}
  \renewcommand{\arraystretch}{1.08}
  \begin{tabular*}{\linewidth}{@{\extracolsep{\fill}}lrrrrr@{}}
    \toprule
    Family & $n$ & \shortstack[r]{Used\\DPUs}
      & \shortstack[r]{Compact proof\\source uses}
      & \shortstack[r]{Expanded original\\clause uses}
      & \shortstack[r]{Largest used DPU\\expanded span} \\
    \midrule
    Hanoi (path) & 20 & 20 & 82 & 2,097,153 & 2,097,150 \\
    Ackermann & 8 & 122 & 3,922 & 2,785,999 & 40,630 \\
    Chinese Rings & 20 & 83 & 213 & 1,398,103 & 1,398,100 \\
    Pascal & 1,000 & 282 & 4,275 & 998,999 & 61,024 \\
    Tarai & 20 & 0 & 3,478 & $5.60\times 10^{14}$ & \textemdash \\
    \bottomrule
  \end{tabular*}
\end{table}

%% file: sections/6.3-Ablation-of-DPU.tex
\subsection{RQ3: Ablation of DPU}
\label{sec:eval-dpu}
We compare the full framework with a variant that disables DPU
construction and reuse. The variant retains RFs, invariant search,
Houdini, and Spacer. It can still guide Spacer with RFs derived through
multiple original rules. Both configurations use the same CHC-COMP
instances and parameterized families under the same resource limits.

\paragraph{CHC-COMP hard instances.}
Enabling DPU increases the number of solved Spacer-hard instances from
202 (78\%) to 228 (88\%) on the 260-instance set from RQ1
(Table~\ref{tab:rq3-chc-comp}). This raises coverage by 10 percentage
points. Most of the gain comes from unsafe results, which increase from
22 to 42. On LIA-Lin, the unsafe count grows from 17 to 38.
This gain indicates that DPUs help establish query reachability on
unsafe LIA-Lin instances. On nonlinear LIA, Full certifies four unsafe
instances and Without DPU certifies five.
Safe results increase from 180 to 186 across both tracks.
As these are single-run results, this small difference may reflect
variability in LLM-guided search.

\begin{table}[H]
  \caption{DPU ablation on Spacer-hard CHC-COMP instances.}
  \label{tab:rq3-chc-comp}
  \centering
  \small
  \setlength{\tabcolsep}{4pt}
  \renewcommand{\arraystretch}{1.08}
  \begin{tabular*}{\linewidth}{@{\extracolsep{\fill}}llrrl@{}}
    \toprule
    Track & Configuration & Safe & Unsafe & \multicolumn{1}{c@{}}{Solved / Cases} \\
    \midrule
    LIA-Lin & Full & 148 & 38 & \evalCount{186/202} (92\%) \\
            & Without DPU & 146 & 17 & \evalCount{163/202} (81\%) \\
    \midrule
    LIA & Full & 38 & 4 & \evalCount{42/58} (72\%) \\
        & Without DPU & 34 & 5 & \evalCount{39/58} (67\%) \\
    \midrule
    Total & Full & 186 & 42 & \evalCount{228/260} (88\%) \\
          & Without DPU & 180 & 22 & \evalCount{202/260} (78\%) \\
    \bottomrule
  \end{tabular*}
\end{table}

On LIA-Lin, Full certifies 21 unsafe instances that remain unresolved
without DPU (Figure~\ref{fig:rq3-runtime-scatter}). Among the 17 unsafe
instances solved by both configurations, Full is slower on 10 and faster
on 7.
In the figure, $U$ denotes unresolved runs. Instances unresolved by
both configurations are omitted.

\begin{figure}[htbp]
  \centering
  \includegraphics[width=0.82\linewidth]{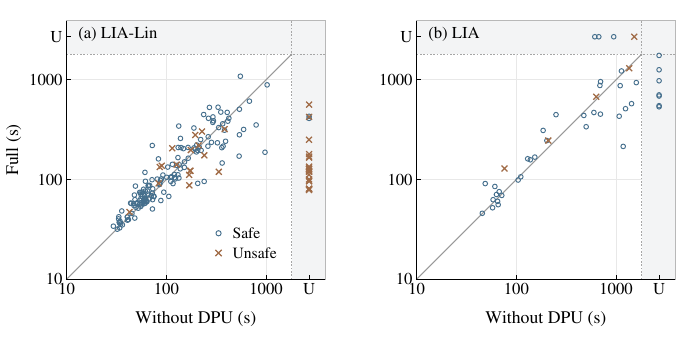}
  \caption{Runtime comparison with and without DPU.}
  \label{fig:rq3-runtime-scatter}
  \Description{Two scatter plots compare Without DPU on the horizontal axis
    with Full on the vertical axis, for LIA-Lin and LIA. Blue circles denote safe
    outcomes and ochre crosses denote unsafe outcomes. Points below the
    diagonal indicate faster Full runs. Shaded margins marked U are categorical
    unresolved outcomes, not runtimes. Points in the right margin are solved
    only by Full. Points in the top margin are solved only without DPU.
    The right margin for LIA-Lin contains 21 unsafe instances solved only by Full.}
\end{figure}

\paragraph{Depth-and-Branching CHC Suite.}
Enabling DPU also increases the number of solved family instances from
27 to 42 out of 49. On the five nonlinear families, coverage increases
from 22/40 to 33/40. Four of these families reach larger parameters with
DPU (Table~\ref{tab:rq2-families}). The maximum solved parameter rises
from 12 to 20 for Hanoi (path) and Chinese Rings, from 6 to 8 for
Ackermann, and from 300 to 1000 for Pascal. Both configurations reach
$n=20$ on Tarai.

%% file: sections/6.4-Limitations.tex
\subsection{Limitations}
\label{sec:eval-limitations}

Our CHC-COMP evaluation is limited to integer-arithmetic instances left
unresolved by the selected Spacer configuration. The parameterized
families probe specific patterns of deep and branching derivations.
These benchmark sets support conclusions about hard-instance solving
and derivation scalability, but not an overall ranking on the full
CHC-COMP suite.

LLM-guided search can vary across runs and model revisions. We report
results from the recorded campaigns without repeated-run averages.
Although all configurations share benchmark sets and resource limits,
repeated runs are needed to assess the stability of the observed
differences in solving coverage and runtime.

%% file: sections/7-Related-Work.tex
\section{Related Work}
\label{sec:related-work}

\noindent\textbf{Traditional CHC Solving.}
Recent advances in CHC solving have led to mature solvers based on a diverse
set of techniques~\cite{CAV-2022-Spacer-Invited,CAV-2023-Golem}.
Spacer extends IC3 with reachable-state under-approximations
and model-based projection for lazy quantifier 
elimination~\cite{VMCAI-2011-IC3,CAV-2014-Spacer,CAV-2022-Spacer-Invited}.
HornCraft's reach facts play a
similar role but record only individual SMT models.
Subsequent work further develops Spacer through global guidance and inductive
reformulations~\cite{CAV-2020-GSpacer,PLDI-2024-MuCyc}.
Eldarica and Ultimate TreeAutomizer both use tree
interpolation~\cite{FMCAD-2018-Eldarica,
EPTCS-2019-Ultimate-TreeAutomizer}. Theta translates CHCs
to control-flow automata for model checking~\cite{ETPCS-2025-Theta}.
PCSat combines CEGIS with probabilistic inference~\cite{AAAI-2020-PCSat}.
Horn-ICE learns invariants and contracts from Horn implication
counterexamples~\cite{OOPSLA-2018-Horn-ICE}.
Beyond solving, PyCHC validates CHC solver results using independent
SMT solvers and proof checkers~\cite{CAV-2026-PyCHC}.

Related approaches summarize multi-step behavior through acceleration,
transition powers, or cycle summaries~\cite{SAS-2023-ADCL,TACAS-2022-TPA,IFM-2025-CSRA}.
ADCL, implemented in LoAT~\cite{CADE-2025-LoAT},
learns accelerated clauses that capture or
under-approximate the closure of repeated recursive
derivations~\cite{SAS-2023-ADCL}.
TPA, implemented in Golem, maintains
over-approximations of bounded transition
powers~\cite{TACAS-2022-TPA,FMCAD-2022-Split-Split-TPA,CAV-2023-Golem}.
ADCL and TPA have shown strong empirical
performance on deep linear instances. DPUs resemble ADCL 
accelerations in under-approximating multi-step behavior and
TPA in composing shorter relations into longer ones. DPUs additionally
support nonlinear CHCs.

\par\noindent\textbf{LLM-Assisted Formal Verification.}
For software verification, LLMs have been used to predict program
invariants~\cite{ICML-2023-LLM-Invariants}, generate loop invariants for
symbolic validation~\cite{EMNLP-2023-iRank,FMCAD-2024-Loopy,
ASE-2024-LaM4Inv,ASE-2024-ESBMC-AI,ISSTA-2025-Clause2Inv,
arXiv-2026-Quokka,TOPLAS-2026-LORIS,arXiv-2026-LimICE}, and propose auxiliary
assertions for symbolic checking~\cite{ICLR-2024-LEMUR}.
The guess-and-check paradigm has also been studied in model checking through
invariant and lemma generation~\cite{arXiv-2026-CILL,
IJCAI-2026-FindIndStrength,arXiv-2026-IC3Syn} and verifier-checked semantic guidance for
word-level PDR~\cite{arXiv-2026-LLM4PDR}, as well as in automated inductive
reasoning~\cite{FM-2026-LLM4Ind}. LLM agents have also been used to construct checked program
proofs~\cite{FSE-2026-AutoRocq} and to repair failed proofs using verifier
feedback~\cite{arXiv-2026-ExVerus,arXiv-2026-Schwarz}.
Following this paradigm, HornCraft asks the LLM to propose reusable
under-approximations of multi-step CHC derivations together with symbolic
witnesses that enable validation by the solver without introducing additional
quantifiers.

%% file: sections/8-Conclusion.tex
\section{Conclusion}
\label{sec:conclusion}

We present HornCraft, an LLM-assisted bi-approximate framework
for CHC solving that combines DPU-based compression of reachability
derivations with invariant generation for unreachability.
Symbolic witnesses enable SMT certification of DPU relations
without introducing additional quantifiers, and certified DPUs
can be soundly composed and reused in linear and branching
derivations. HornCraft solved 228 of 260 CHC-COMP instances
unresolved by a 30-minute Spacer run. Compared with the variant
without DPUs, the full framework certified 42 rather
than 22 unsafe instances. On the Depth-and-Branching CHC Suite,
it also solved larger instances in four nonlinear families.
DPUs thus provide a formally checked interface for using
LLM-proposed symbolic relations as reusable proof steps
for discovering deep and branching counterexamples.

%% file: references.bib
@inproceedings{CAV-2016-JayHorn,
  author       = {Temesghen Kahsai and
                  Philipp R{\"{u}}mmer and
                  Huascar Sanchez and
                  Martin Sch{\"{a}}f},
  editor       = {Swarat Chaudhuri and
                  Azadeh Farzan},
  title        = {JayHorn: {A} Framework for Verifying Java programs},
  booktitle    = {Computer Aided Verification - 28th International Conference, {CAV}
                  2016, Toronto, ON, Canada, July 17-23, 2016, Proceedings, Part {I}},
  series       = {Lecture Notes in Computer Science},
  volume       = {9779},
  pages        = {352--358},
  publisher    = {Springer},
  year         = {2016},
  url          = {https://doi.org/10.1007/978-3-319-41528-4\_19},
  doi          = {10.1007/978-3-319-41528-4\_19},
  bibsource    = {dblp computer science bibliography, https://dblp.org}
}

@inproceedings{EuroSP-2016-HornDroid,
  author       = {Stefano Calzavara and
                  Ilya Grishchenko and
                  Matteo Maffei},
  title        = {HornDroid: Practical and Sound Static Analysis of Android Applications
                  by {SMT} Solving},
  booktitle    = {{IEEE} European Symposium on Security and Privacy, EuroS{\&}P
                  2016, Saarbr{\"{u}}cken, Germany, March 21-24, 2016},
  pages        = {47--62},
  publisher    = {{IEEE}},
  year         = {2016},
  url          = {https://doi.org/10.1109/EuroSP.2016.16},
  doi          = {10.1109/EUROSP.2016.16},
  bibsource    = {dblp computer science bibliography, https://dblp.org}
}

@article{TOPLAS-2021-RustHorn,
  author       = {Yusuke Matsushita and
                  Takeshi Tsukada and
                  Naoki Kobayashi},
  title        = {RustHorn: CHC-based Verification for Rust Programs},
  journal      = {{ACM} Trans. Program. Lang. Syst.},
  volume       = {43},
  number       = {4},
  pages        = {15:1--15:54},
  year         = {2021},
  url          = {https://doi.org/10.1145/3462205},
  doi          = {10.1145/3462205},
  bibsource    = {dblp computer science bibliography, https://dblp.org}
}

@inproceedings{VMCAI-2022-SmartACE,
  author       = {Scott Wesley and
                  Maria Christakis and
                  Jorge A. Navas and
                  Richard J. Trefler and
                  Valentin W{\"{u}}stholz and
                  Arie Gurfinkel},
  editor       = {Bernd Finkbeiner and
                  Thomas Wies},
  title        = {Verifying Solidity Smart Contracts via Communication Abstraction in
                  SmartACE},
  booktitle    = {Verification, Model Checking, and Abstract Interpretation - 23rd International
                  Conference, {VMCAI} 2022, Philadelphia, PA, USA, January 16-18, 2022,
                  Proceedings},
  series       = {Lecture Notes in Computer Science},
  volume       = {13182},
  pages        = {425--449},
  publisher    = {Springer},
  year         = {2022},
  url          = {https://doi.org/10.1007/978-3-030-94583-1\_21},
  doi          = {10.1007/978-3-030-94583-1\_21},
  bibsource    = {dblp computer science bibliography, https://dblp.org}
}

@inproceedings{CAV-2022-SolCMC,
  author       = {Leonardo Alt and
                  Martin Blicha and
                  Antti E. J. Hyv{\"{a}}rinen and
                  Natasha Sharygina},
  editor       = {Sharon Shoham and
                  Yakir Vizel},
  title        = {SolCMC: Solidity Compiler's Model Checker},
  booktitle    = {Computer Aided Verification - 34th International Conference, {CAV}
                  2022, Haifa, Israel, August 7-10, 2022, Proceedings, Part {I}},
  series       = {Lecture Notes in Computer Science},
  volume       = {13371},
  pages        = {325--338},
  publisher    = {Springer},
  year         = {2022},
  url          = {https://doi.org/10.1007/978-3-031-13185-1\_16},
  doi          = {10.1007/978-3-031-13185-1\_16},
  bibsource    = {dblp computer science bibliography, https://dblp.org}
}

@inproceedings{VMCAI-2022-Korn,
  author       = {Gidon Ernst},
  editor       = {Bernd Finkbeiner and
                  Thomas Wies},
  title        = {Loop Verification with Invariants and Contracts},
  booktitle    = {Verification, Model Checking, and Abstract Interpretation - 23rd International
                  Conference, {VMCAI} 2022, Philadelphia, PA, USA, January 16-18, 2022,
                  Proceedings},
  series       = {Lecture Notes in Computer Science},
  volume       = {13182},
  pages        = {69--92},
  publisher    = {Springer},
  year         = {2022},
  url          = {https://doi.org/10.1007/978-3-030-94583-1\_4},
  doi          = {10.1007/978-3-030-94583-1\_4},
  bibsource    = {dblp computer science bibliography, https://dblp.org}
}

@inproceedings{EPTCS-2019-Ultimate-TreeAutomizer,
  author       = {Daniel Dietsch and
                  Matthias Heizmann and
                  Jochen Hoenicke and
                  Alexander Nutz and
                  Andreas Podelski},
  editor       = {Emanuele De Angelis and
                  Grigory Fedyukovich and
                  Nikos Tzevelekos and
                  Mattias Ulbrich},
  title        = {Ultimate TreeAutomizer {(CHC-COMP} Tool Description)},
  booktitle    = {Proceedings of the Sixth Workshop on Horn Clauses for Verification
                  and Synthesis and Third Workshop on Program Equivalence and Relational
                  Reasoning, HCVS/PERR@ETAPS 2019, Prague, Czech Republic, 6-7th April
                  2019},
  series       = {{EPTCS}},
  volume       = {296},
  pages        = {42--47},
  year         = {2019},
  url          = {https://doi.org/10.4204/EPTCS.296.7},
  doi          = {10.4204/EPTCS.296.7},
  bibsource    = {dblp computer science bibliography, https://dblp.org}
}

@inproceedings{ETPCS-2025-Theta,
  author       = {Levente Bajczi and
                  Mil{\'{a}}n Mondok and
                  Vince Moln{\'{a}}r},
  editor       = {Emanuele De Angelis and
                  Florian Frohn},
  title        = {Theta as a Horn Solver},
  booktitle    = {Proceedings of the 12th Workshop on Horn Clauses for Verification
                  and Synthesis, HCVS@CAV 2025, Zagreb, Croatia, 22 July 2025},
  series       = {{EPTCS}},
  volume       = {434},
  pages        = {27--39},
  year         = {2025},
  month        = oct,
  url          = {https://doi.org/10.4204/EPTCS.434.5},
  doi          = {10.4204/EPTCS.434.5},
  bibsource    = {dblp computer science bibliography, https://dblp.org}
}

@inproceedings{AAAI-2020-PCSat,
  author       = {Yuki Satake and
                  Hiroshi Unno and
                  Hinata Yanagi},
  title        = {Probabilistic Inference for Predicate Constraint Satisfaction},
  booktitle    = {The Thirty-Fourth {AAAI} Conference on Artificial Intelligence, {AAAI}
                  2020, The Thirty-Second Innovative Applications of Artificial Intelligence
                  Conference, {IAAI} 2020, The Tenth {AAAI} Symposium on Educational
                  Advances in Artificial Intelligence, {EAAI} 2020, New York, NY, USA,
                  February 7-12, 2020},
  pages        = {1644--1651},
  publisher    = {{AAAI} Press},
  year         = {2020},
  url          = {https://doi.org/10.1609/aaai.v34i02.5526},
  doi          = {10.1609/AAAI.V34I02.5526},
  bibsource    = {dblp computer science bibliography, https://dblp.org}
}

@inproceedings{TACAS-2008-Z3,
  author       = {Leonardo Mendon{\c{c}}a de Moura and
                  Nikolaj S. Bj{\o}rner},
  editor       = {C. R. Ramakrishnan and
                  Jakob Rehof},
  title        = {{Z3:} An Efficient {SMT} Solver},
  booktitle    = {Tools and Algorithms for the Construction and Analysis of Systems,
                  14th International Conference, {TACAS} 2008, Held as Part of the Joint
                  European Conferences on Theory and Practice of Software, {ETAPS} 2008,
                  Budapest, Hungary, March 29-April 6, 2008. Proceedings},
  series       = {Lecture Notes in Computer Science},
  volume       = {4963},
  pages        = {337--340},
  publisher    = {Springer},
  year         = {2008},
  url          = {https://doi.org/10.1007/978-3-540-78800-3\_24},
  doi          = {10.1007/978-3-540-78800-3\_24},
  bibsource    = {dblp computer science bibliography, https://dblp.org}
}

@inproceedings{CAV-2015-Seahorn,
  author       = {Arie Gurfinkel and
                  Temesghen Kahsai and
                  Anvesh Komuravelli and
                  Jorge A. Navas},
  editor       = {Daniel Kroening and
                  Corina S. Pasareanu},
  title        = {The SeaHorn Verification Framework},
  booktitle    = {Computer Aided Verification - 27th International Conference, {CAV}
                  2015, San Francisco, CA, USA, July 18-24, 2015, Proceedings, Part
                  {I}},
  series       = {Lecture Notes in Computer Science},
  volume       = {9206},
  pages        = {343--361},
  publisher    = {Springer},
  year         = {2015},
  url          = {https://doi.org/10.1007/978-3-319-21690-4\_20},
  doi          = {10.1007/978-3-319-21690-4\_20},
  bibsource    = {dblp computer science bibliography, https://dblp.org}
}

@inproceedings{CAV-2026-DrvHorn,
  author       = {Joe Hattori and
                  Naoki Kobayashi and
                  Ken Sakayori},
  editor       = {Eva Darulova and
                  Anthony W. Lin and
                  Philipp R{\"{u}}mmer},
  title        = {Automatic Detection of Reference Counting Bugs in Linux Kernel Drivers},
  booktitle    = {Computer Aided Verification - 38th International Conference, {CAV}
                  2026, Lisbon, Portugal, July 26-29, 2026, Proceedings, Part {III}},
  series       = {Lecture Notes in Computer Science},
  volume       = {16684},
  pages        = {191--203},
  publisher    = {Springer},
  year         = {2026},
  url          = {https://doi.org/10.1007/978-3-032-32537-2\_9},
  doi          = {10.1007/978-3-032-32537-2\_9},
  bibsource    = {dblp computer science bibliography, https://dblp.org}
}

@inproceedings{CAV-2023-Golem,
  author       = {Martin Blicha and
                  Konstantin Britikov and
                  Natasha Sharygina},
  editor       = {Constantin Enea and
                  Akash Lal},
  title        = {The Golem Horn Solver},
  booktitle    = {Computer Aided Verification - 35th International Conference, {CAV}
                  2023, Paris, France, July 17-22, 2023, Proceedings, Part {II}},
  series       = {Lecture Notes in Computer Science},
  volume       = {13965},
  pages        = {209--223},
  publisher    = {Springer},
  year         = {2023},
  url          = {https://doi.org/10.1007/978-3-031-37703-7\_10},
  doi          = {10.1007/978-3-031-37703-7\_10},
  bibsource    = {dblp computer science bibliography, https://dblp.org}
}

@inproceedings{FMCAD-2018-Eldarica,
  author       = {Hossein Hojjat and
                  Philipp R{\"{u}}mmer},
  editor       = {Nikolaj S. Bj{\o}rner and
                  Arie Gurfinkel},
  title        = {The {ELDARICA} Horn Solver},
  booktitle    = {2018 Formal Methods in Computer Aided Design, {FMCAD} 2018, Austin,
                  TX, USA, October 30 - November 2, 2018},
  pages        = {1--7},
  publisher    = {{IEEE}},
  year         = {2018},
  url          = {https://doi.org/10.23919/FMCAD.2018.8603013},
  doi          = {10.23919/FMCAD.2018.8603013},
  bibsource    = {dblp computer science bibliography, https://dblp.org}
}

@inproceedings{CAV-2026-PyCHC,
  author       = {Anna Becchi and
                  Martin Blicha and
                  Rodrigo Otoni and
                  Natasha Sharygina},
  editor       = {Eva Darulova and
                  Anthony W. Lin and
                  Philipp R{\"{u}}mmer},
  title        = {PyCHC: {A} Framework for Certified Horn Solving and CHC-Based Design},
  booktitle    = {Computer Aided Verification - 38th International Conference, {CAV}
                  2026, Lisbon, Portugal, July 26-29, 2026, Proceedings, Part {III}},
  series       = {Lecture Notes in Computer Science},
  volume       = {16684},
  pages        = {261--275},
  publisher    = {Springer},
  year         = {2026},
  url          = {https://doi.org/10.1007/978-3-032-32537-2\_13},
  doi          = {10.1007/978-3-032-32537-2\_13},
  bibsource    = {dblp computer science bibliography, https://dblp.org}
}

@dataset{CHC-COMP-2026-Results,
  author       = {Bajczi, Levente and
                  Britikov, Konstantin},
  title        = {CHC-COMP'26 results and competition infrastructure},
  month        = may,
  year         = 2026,
  publisher    = {Zenodo},
  doi          = {10.5281/zenodo.20413019},
  url          = {https://doi.org/10.5281/zenodo.20413019},
}

@inproceedings{VMCAI-2011-IC3,
  author       = {Aaron R. Bradley},
  editor       = {Ranjit Jhala and
                  David A. Schmidt},
  title        = {SAT-Based Model Checking without Unrolling},
  booktitle    = {Verification, Model Checking, and Abstract Interpretation - 12th International
                  Conference, {VMCAI} 2011, Austin, TX, USA, January 23-25, 2011. Proceedings},
  series       = {Lecture Notes in Computer Science},
  volume       = {6538},
  pages        = {70--87},
  publisher    = {Springer},
  year         = {2011},
  url          = {https://doi.org/10.1007/978-3-642-18275-4\_7},
  doi          = {10.1007/978-3-642-18275-4\_7},
  bibsource    = {dblp computer science bibliography, https://dblp.org}
}

@inproceedings{SAT-2012-GPDR,
  author       = {Krystof Hoder and
                  Nikolaj S. Bj{\o}rner},
  editor       = {Alessandro Cimatti and
                  Roberto Sebastiani},
  title        = {Generalized Property Directed Reachability},
  booktitle    = {Theory and Applications of Satisfiability Testing - {SAT} 2012 - 15th
                  International Conference, Trento, Italy, June 17-20, 2012. Proceedings},
  series       = {Lecture Notes in Computer Science},
  volume       = {7317},
  pages        = {157--171},
  publisher    = {Springer},
  year         = {2012},
  url          = {https://doi.org/10.1007/978-3-642-31612-8\_13},
  doi          = {10.1007/978-3-642-31612-8\_13},
  bibsource    = {dblp computer science bibliography, https://dblp.org}
}

@inproceedings{CAV-2014-Spacer,
  author       = {Anvesh Komuravelli and
                  Arie Gurfinkel and
                  Sagar Chaki},
  editor       = {Armin Biere and
                  Roderick Bloem},
  title        = {SMT-Based Model Checking for Recursive Programs},
  booktitle    = {Computer Aided Verification - 26th International Conference, {CAV}
                  2014, Held as Part of the Vienna Summer of Logic, {VSL} 2014, Vienna,
                  Austria, July 18-22, 2014. Proceedings},
  series       = {Lecture Notes in Computer Science},
  volume       = {8559},
  pages        = {17--34},
  publisher    = {Springer},
  year         = {2014},
  url          = {https://doi.org/10.1007/978-3-319-08867-9\_2},
  doi          = {10.1007/978-3-319-08867-9\_2},
  bibsource    = {dblp computer science bibliography, https://dblp.org}
}

@inproceedings{CAV-2022-Spacer-Invited,
  author       = {Arie Gurfinkel},
  editor       = {Sharon Shoham and
                  Yakir Vizel},
  title        = {Program Verification with Constrained Horn Clauses (Invited Paper)},
  booktitle    = {Computer Aided Verification - 34th International Conference, {CAV}
                  2022, Haifa, Israel, August 7-10, 2022, Proceedings, Part {I}},
  series       = {Lecture Notes in Computer Science},
  volume       = {13371},
  pages        = {19--29},
  publisher    = {Springer},
  year         = {2022},
  url          = {https://doi.org/10.1007/978-3-031-13185-1\_2},
  doi          = {10.1007/978-3-031-13185-1\_2},
  bibsource    = {dblp computer science bibliography, https://dblp.org}
}

@inproceedings{CAV-2020-GSpacer,
  author       = {Hari Govind Vediramana Krishnan and
                  Yuting Chen and
                  Sharon Shoham and
                  Arie Gurfinkel},
  editor       = {Shuvendu K. Lahiri and
                  Chao Wang},
  title        = {Global Guidance for Local Generalization in Model Checking},
  booktitle    = {Computer Aided Verification - 32nd International Conference, {CAV}
                  2020, Los Angeles, CA, USA, July 21-24, 2020, Proceedings, Part {II}},
  series       = {Lecture Notes in Computer Science},
  volume       = {12225},
  pages        = {101--125},
  publisher    = {Springer},
  year         = {2020},
  url          = {https://doi.org/10.1007/978-3-030-53291-8\_7},
  doi          = {10.1007/978-3-030-53291-8\_7},
  bibsource    = {dblp computer science bibliography, https://dblp.org}
}

@article{PLDI-2024-MuCyc,
  author       = {Takeshi Tsukada and
                  Hiroshi Unno},
  title        = {Inductive Approach to Spacer},
  journal      = {Proc. {ACM} Program. Lang.},
  volume       = {8},
  number       = {{PLDI}},
  pages        = {1979--2002},
  year         = {2024},
  url          = {https://doi.org/10.1145/3656457},
  doi          = {10.1145/3656457},
  bibsource    = {dblp computer science bibliography, https://dblp.org}
}

@inproceedings{FMCAD-2022-Split-Split-TPA,
  author       = {Martin Blicha and
                  Grigory Fedyukovich and
                  Antti E. J. Hyv{\"{a}}rinen and
                  Natasha Sharygina},
  editor       = {Alberto Griggio and
                  Neha Rungta},
  title        = {Split Transition Power Abstraction for Unbounded Safety},
  booktitle    = {22nd Formal Methods in Computer-Aided Design, {FMCAD} 2022, Trento,
                  Italy, October 17-21, 2022},
  pages        = {349--358},
  publisher    = {{IEEE}},
  year         = {2022},
  url          = {https://doi.org/10.34727/2022/isbn.978-3-85448-053-2\_42},
  doi          = {10.34727/2022/ISBN.978-3-85448-053-2\_42},
  bibsource    = {dblp computer science bibliography, https://dblp.org}
}

@inproceedings{TACAS-2022-TPA,
  author       = {Martin Blicha and
                  Grigory Fedyukovich and
                  Antti E. J. Hyv{\"{a}}rinen and
                  Natasha Sharygina},
  editor       = {Dana Fisman and
                  Grigore Rosu},
  title        = {Transition Power Abstractions for Deep Counterexample Detection},
  booktitle    = {Tools and Algorithms for the Construction and Analysis of Systems
                  - 28th International Conference, {TACAS} 2022, Held as Part of the
                  European Joint Conferences on Theory and Practice of Software, {ETAPS}
                  2022, Munich, Germany, April 2-7, 2022, Proceedings, Part {I}},
  series       = {Lecture Notes in Computer Science},
  volume       = {13243},
  pages        = {524--542},
  publisher    = {Springer},
  year         = {2022},
  url          = {https://doi.org/10.1007/978-3-030-99524-9\_29},
  doi          = {10.1007/978-3-030-99524-9\_29},
  bibsource    = {dblp computer science bibliography, https://dblp.org}
}

@inproceedings{IFM-2025-CSRA,
  author       = {Konstantin Britikov and
                  Grigory Fedyukovich and
                  Natasha Sharygina},
  editor       = {Ferruccio Damiani and
                  Marie Farrell},
  title        = {CHC-Based Reachability Analysis via Cycle Summarization},
  booktitle    = {Integrated Formal Methods - 20th International Conference, iFM 2025,
                  Paris, France, November 19-21, 2025, Proceedings},
  series       = {Lecture Notes in Computer Science},
  volume       = {16194},
  pages        = {205--225},
  publisher    = {Springer},
  year         = {2025},
  url          = {https://doi.org/10.1007/978-3-032-10794-7\_11},
  doi          = {10.1007/978-3-032-10794-7\_11},
  bibsource    = {dblp computer science bibliography, https://dblp.org}
}

@inproceedings{SAS-2023-ADCL,
  author       = {Florian Frohn and
                  J{\"{u}}rgen Giesl},
  editor       = {Manuel V. Hermenegildo and
                  Jos{\'{e}} F. Morales},
  title        = {{ADCL:} Acceleration Driven Clause Learning for Constrained Horn Clauses},
  booktitle    = {Static Analysis - 30th International Symposium, {SAS} 2023, Cascais,
                  Portugal, October 22-24, 2023, Proceedings},
  series       = {Lecture Notes in Computer Science},
  volume       = {14284},
  pages        = {259--285},
  publisher    = {Springer},
  year         = {2023},
  url          = {https://doi.org/10.1007/978-3-031-44245-2\_13},
  doi          = {10.1007/978-3-031-44245-2\_13},
  bibsource    = {dblp computer science bibliography, https://dblp.org}
}

@inproceedings{CADE-2025-LoAT,
  author       = {Florian Frohn and
                  J{\"{u}}rgen Giesl},
  editor       = {Clark W. Barrett and
                  Uwe Waldmann},
  title        = {Infinite State Model Checking by Learning Transitive Relations},
  booktitle    = {Automated Deduction - {CADE} 30 - 30th International Conference on
                  Automated Deduction, Stuttgart, Germany, July 28-31, 2025, Proceedings},
  series       = {Lecture Notes in Computer Science},
  volume       = {15943},
  pages        = {468--489},
  publisher    = {Springer},
  year         = {2025},
  url          = {https://doi.org/10.1007/978-3-031-99984-0\_25},
  doi          = {10.1007/978-3-031-99984-0\_25},
  bibsource    = {dblp computer science bibliography, https://dblp.org}
}

@article{OOPSLA-2018-Horn-ICE,
  author       = {P. Ezudheen and
                  Daniel Neider and
                  Deepak D'Souza and
                  Pranav Garg and
                  P. Madhusudan},
  title        = {Horn-ICE learning for synthesizing invariants and contracts},
  journal      = {Proc. {ACM} Program. Lang.},
  volume       = {2},
  number       = {{OOPSLA}},
  pages        = {131:1--131:25},
  year         = {2018},
  url          = {https://doi.org/10.1145/3276501},
  doi          = {10.1145/3276501},
  bibsource    = {dblp computer science bibliography, https://dblp.org}
}

@inproceedings{FME-2001-Houdini,
  author       = {Cormac Flanagan and
                  K. Rustan M. Leino},
  editor       = {Jos{\'{e}} Nuno Oliveira and
                  Pamela Zave},
  title        = {Houdini, an Annotation Assistant for ESC/Java},
  booktitle    = {{FME} 2001: Formal Methods for Increasing Software Productivity, International
                  Symposium of Formal Methods Europe, Berlin, Germany, March 12-16,
                  2001, Proceedings},
  series       = {Lecture Notes in Computer Science},
  volume       = {2021},
  pages        = {500--517},
  publisher    = {Springer},
  year         = {2001},
  url          = {https://doi.org/10.1007/3-540-45251-6\_29},
  doi          = {10.1007/3-540-45251-6\_29},
  bibsource    = {dblp computer science bibliography, https://dblp.org}
}

@article{arXiv-2026-Schwarz,
  author       = {Jingyu Ke and
                  Ling{-}I Wu and
                  Guoqiang Li},
  title        = {Schwarz: Solver-Aware Agentic Program Verification},
  journal      = {CoRR},
  volume       = {abs/2608.30803},
  year         = {2026},
  url          = {https://doi.org/10.48550/arXiv.2608.30803},
  doi          = {10.48550/ARXIV.2608.30803},
  eprinttype   = {arXiv},
  eprint       = {2608.30803},
  bibsource    = {dblp computer science bibliography, https://dblp.org}
}

@article{arXiv-2026-AProver,
  author       = {Youcheng Sun and
                  Jiawen Liu and
                  Daniel Kroening and
                  Jason Xue},
  title        = {Agentic Model Checking},
  journal      = {CoRR},
  volume       = {abs/2605.21434},
  year         = {2026},
  url          = {https://doi.org/10.48550/arXiv.2605.21434},
  doi          = {10.48550/ARXIV.2605.21434},
  eprinttype   = {arXiv},
  eprint       = {2605.21434},
  bibsource    = {dblp computer science bibliography, https://dblp.org}
}

@article{FSE-2026-AutoRocq,
  author       = {Haoxin Tu and
                  Huan Zhao and
                  Yahui Song and
                  Mehtab Zafar and
                  Ruijie Meng and
                  Abhik Roychoudhury},
  title        = {Agentic Verification of Software Systems},
  journal      = {Proc. {ACM} Softw. Eng.},
  volume       = {3},
  number       = {{FSE}},
  pages        = {3558--3581},
  year         = {2026},
  url          = {https://doi.org/10.1145/3808164},
  doi          = {10.1145/3808164},
  bibsource    = {dblp computer science bibliography, https://dblp.org}
}

@article{arXiv-2026-ExVerus,
  author       = {Jun Yang and
                  Yuechun Sun and
                  Yi Wu and
                  Rodrigo Caridad and
                  Yongwei Yuan and
                  Jianan Yao and
                  Shan Lu and
                  Kexin Pei},
  title        = {ExVerus: Verus Proof Repair via Counterexample Reasoning},
  journal      = {CoRR},
  volume       = {abs/2603.25810},
  year         = {2026},
  url          = {https://doi.org/10.48550/arXiv.2603.25810},
  doi          = {10.48550/ARXIV.2603.25810},
  eprinttype   = {arXiv},
  eprint       = {2603.25810},
  bibsource    = {dblp computer science bibliography, https://dblp.org}
}

@article{arXiv-2026-LLM4PDR,
  author       = {Guangyu Hu and
                  Mingkai Miao and
                  Zhiyuan Yan and
                  Xiaofeng Zhou and
                  Wei Zhang and
                  Hongce Zhang},
  title        = {Enhancing Word-Level Property Directed Reachability with LLM-Driven
                  Semantic Guidance},
  journal      = {CoRR},
  volume       = {abs/2609.30131},
  year         = {2026},
  url          = {https://doi.org/10.48550/arXiv.2609.30131},
  doi          = {10.48550/ARXIV.2609.30131},
  eprinttype   = {arXiv},
  eprint       = {2609.30131},
  bibsource    = {dblp computer science bibliography, https://dblp.org}
}

@inproceedings{ICML-2023-LLM-Invariants,
  author       = {Kexin Pei and
                  David Bieber and
                  Kensen Shi and
                  Charles Sutton and
                  Pengcheng Yin},
  editor       = {Andreas Krause and
                  Emma Brunskill and
                  Kyunghyun Cho and
                  Barbara Engelhardt and
                  Sivan Sabato and
                  Jonathan Scarlett},
  title        = {Can Large Language Models Reason about Program Invariants?},
  booktitle    = {International Conference on Machine Learning, {ICML} 2023, 23-29 July
                  2023, Honolulu, Hawaii, {USA}},
  series       = {Proceedings of Machine Learning Research},
  volume       = {202},
  pages        = {27496--27520},
  publisher    = {{PMLR}},
  year         = {2023},
  url          = {https://proceedings.mlr.press/v202/pei23a.html},
  bibsource    = {dblp computer science bibliography, https://dblp.org}
}

@inproceedings{FMCAD-2024-Loopy,
  author       = {Adharsh Kamath and
                  J. Nausheen Mohammed and
                  Aditya Senthilnathan and
                  Saikat Chakraborty and
                  Pantazis Deligiannis and
                  Shuvendu K. Lahiri and
                  Akash Lal and
                  Aseem Rastogi and
                  Subhajit Roy and
                  Rahul Sharma},
  editor       = {Nina Narodytska and
                  Philipp R{\"{u}}mmer},
  title        = {Leveraging LLMs for Program Verification},
  booktitle    = {Formal Methods in Computer-Aided Design, {FMCAD} 2024, Prague, Czech
                  Republic, October 15-18, 2024},
  pages        = {107--118},
  publisher    = {{IEEE}},
  year         = {2024},
  url          = {https://doi.org/10.34727/2024/isbn.978-3-85448-065-5\_16},
  doi          = {10.34727/2024/ISBN.978-3-85448-065-5\_16},
  bibsource    = {dblp computer science bibliography, https://dblp.org}
}

@inproceedings{EMNLP-2023-iRank,
  author       = {Saikat Chakraborty and
                  Shuvendu K. Lahiri and
                  Sarah Fakhoury and
                  Akash Lal and
                  Madanlal Musuvathi and
                  Aseem Rastogi and
                  Aditya Senthilnathan and
                  Rahul Sharma and
                  Nikhil Swamy},
  editor       = {Houda Bouamor and
                  Juan Pino and
                  Kalika Bali},
  title        = {Ranking LLM-Generated Loop Invariants for Program Verification},
  booktitle    = {Findings of the Association for Computational Linguistics: {EMNLP}
                  2023, Singapore, December 6-10, 2023},
  series       = {Findings of {ACL}},
  volume       = {{EMNLP} 2023},
  pages        = {9164--9175},
  publisher    = {Association for Computational Linguistics},
  year         = {2023},
  url          = {https://doi.org/10.18653/v1/2023.findings-emnlp.614},
  doi          = {10.18653/V1/2023.FINDINGS-EMNLP.614},
  bibsource    = {dblp computer science bibliography, https://dblp.org}
}

@article{ISSTA-2025-Clause2Inv,
  author       = {Weining Cao and
                  Guangyuan Wu and
                  Tangzhi Xu and
                  Yuan Yao and
                  Hengfeng Wei and
                  Taolue Chen and
                  Xiaoxing Ma},
  title        = {Clause2Inv: {A} Generate-Combine-Check Framework for Loop Invariant
                  Inference},
  journal      = {Proc. {ACM} Softw. Eng.},
  volume       = {2},
  number       = {{ISSTA}},
  pages        = {1009--1030},
  year         = {2025},
  url          = {https://doi.org/10.1145/3728920},
  doi          = {10.1145/3728920},
  bibsource    = {dblp computer science bibliography, https://dblp.org}
}

@article{arXiv-2026-Quokka,
  title={Quokka: Accelerating Program Verification with LLMs via Invariant Synthesis},
  author={Wei, Anjiang and Sun, Tianran and Suresh, Tarun and Wu, Haoze and Wang, Ke and Aiken, Alex},
  journal={arXiv preprint arXiv:2509.21629},
  year={2025}
}

@article{TOPLAS-2026-LORIS,
  author       = {Tianchi Li and
                  Zhenyu Yan and
                  Junhao Liu and
                  Peng Di and
                  Xin Zhang},
  title        = {Guiding LLM-Based Loop Invariant Synthesis via Feedback on Local Reasoning
                  Errors},
  journal      = {{ACM} Trans. Program. Lang. Syst.},
  volume       = {48},
  number       = {2},
  pages        = {8:1--8:30},
  year         = {2026},
  url          = {https://doi.org/10.1145/3806652},
  doi          = {10.1145/3806652},
  bibsource    = {dblp computer science bibliography, https://dblp.org}
}

@article{arXiv-2026-IC3Syn,
  author       = {Weining Cao and
                  Guangyuan Wu and
                  Yuan Yao and
                  Hengfeng Wei and
                  Taolue Chen and
                  Xiaoxing Ma},
  title        = {Synthesizing Inductive Invariants for Distributed Protocols via {IC3}
                  and Large Language Models},
  journal      = {CoRR},
  volume       = {abs/2605.24619},
  year         = {2026},
  url          = {https://doi.org/10.48550/arXiv.2605.24619},
  doi          = {10.48550/ARXIV.2605.24619},
  eprinttype   = {arXiv},
  eprint       = {2605.24619},
  bibsource    = {dblp computer science bibliography, https://dblp.org}
}

@inproceedings{ICLR-2024-LEMUR,
  author       = {Haoze Wu and
                  Clark W. Barrett and
                  Nina Narodytska},
  title        = {Lemur: Integrating Large Language Models in Automated Program Verification},
  booktitle    = {The Twelfth International Conference on Learning Representations,
                  {ICLR} 2024, Vienna, Austria, May 7-11, 2024},
  publisher    = {OpenReview.net},
  year         = {2024},
  url          = {https://openreview.net/forum?id=Q3YaCghZNt},
  bibsource    = {dblp computer science bibliography, https://dblp.org}
}

@inproceedings{ASE-2024-ESBMC-AI,
  author       = {Muhammad A. A. Pirzada and
                  Giles Reger and
                  Ahmed Bhayat and
                  Lucas C. Cordeiro},
  editor       = {Vladimir Filkov and
                  Baishakhi Ray and
                  Minghui Zhou},
  title        = {LLM-Generated Invariants for Bounded Model Checking Without Loop Unrolling},
  booktitle    = {Proceedings of the 39th {IEEE/ACM} International Conference on Automated
                  Software Engineering, {ASE} 2024, Sacramento, CA, USA, October 27
                  - November 1, 2024},
  pages        = {1395--1407},
  publisher    = {{ACM}},
  year         = {2024},
  url          = {https://doi.org/10.1145/3691620.3695512},
  doi          = {10.1145/3691620.3695512},
  bibsource    = {dblp computer science bibliography, https://dblp.org}
}

@article{arXiv-2026-CILL,
  author       = {Yuheng Su and
                  Tianjun Bu and
                  Qiusong Yang and
                  Yiwei Ci and
                  Enyuan Tian},
  title        = {CIll: CTI-Guided Invariant Generation via LLMs for Model Checking},
  journal      = {CoRR},
  volume       = {abs/2602.23389},
  year         = {2026},
  url          = {https://doi.org/10.48550/arXiv.2602.23389},
  doi          = {10.48550/ARXIV.2602.23389},
  eprinttype   = {arXiv},
  eprint       = {2602.23389},
  bibsource    = {dblp computer science bibliography, https://dblp.org}
}

@inproceedings{IJCAI-2026-FindIndStrength,
  author       = {Romy Peled and
                  Daniel Kroening and
                  Michael Tautschnig and
                  Yakir Vizel},
  title        = {Large Lemma Miners: Can LLMs Do Induction Proofs for Hardware?},
  booktitle    = {Proceedings of the Thirty-Fifth International Joint Conference on
                  Artificial Intelligence and the 29th European Conference on Artificial
                  Intelligence, {IJCAI/ECAI} 2026, Bremen, Germany, 15-21 August 2026},
  pages        = {3988--3998},
  publisher    = {ijcai.org},
  year         = {2026},
  url          = {https://doi.org/10.24963/ijcai.2026/444},
  doi          = {10.24963/IJCAI.2026/444},
  bibsource    = {dblp computer science bibliography, https://dblp.org}
}

@inproceedings{FM-2026-LLM4Ind,
  author       = {Weizhi Feng and
                  Shidong Shen and
                  Jiaxiang Liu and
                  Taolue Chen and
                  Fu Song and
                  Zhilin Wu},
  editor       = {Augusto Sampaio and
                  Mari{\"{e}}lle Stoelinga},
  title        = {Can {LLM} Aid in Solving Constraints with Inductive Definitions?},
  booktitle    = {Formal Methods - 27th International Symposium, {FM} 2026, Tokyo, Japan,
                  May 18-22, 2026, Proceedings, Part {II}},
  series       = {Lecture Notes in Computer Science},
  volume       = {16557},
  pages        = {111--132},
  publisher    = {Springer},
  year         = {2026},
  url          = {https://doi.org/10.1007/978-3-032-26220-2\_6},
  doi          = {10.1007/978-3-032-26220-2\_6},
  bibsource    = {dblp computer science bibliography, https://dblp.org}
}

@article{arXiv-2026-LimICE,
  author       = {Kai Fan and
                  ShiWen Yu and
                  GuangSheng Fan and
                  HaoAng Chi and
                  WanWei Liu and
                  Ji Wang},
  title        = {LimICE: Integrating {LLM} into {ICE} Framework for Efficient Loop
                  Invariant Inference},
  journal      = {CoRR},
  volume       = {abs/2607.27606},
  year         = {2026},
  url          = {https://doi.org/10.48550/arXiv.2607.27606},
  doi          = {10.48550/ARXIV.2607.27606},
  eprinttype   = {arXiv},
  eprint       = {2607.27606},
  bibsource    = {dblp computer science bibliography, https://dblp.org}
}

@inproceedings{ASE-2024-LaM4Inv,
  author       = {Guangyuan Wu and
                  Weining Cao and
                  Yuan Yao and
                  Hengfeng Wei and
                  Taolue Chen and
                  Xiaoxing Ma},
  editor       = {Vladimir Filkov and
                  Baishakhi Ray and
                  Minghui Zhou},
  title        = {{LLM} Meets Bounded Model Checking: Neuro-symbolic Loop Invariant
                  Inference},
  booktitle    = {Proceedings of the 39th {IEEE/ACM} International Conference on Automated
                  Software Engineering, {ASE} 2024, Sacramento, CA, USA, October 27
                  - November 1, 2024},
  pages        = {406--417},
  publisher    = {{ACM}},
  year         = {2024},
  url          = {https://doi.org/10.1145/3691620.3695014},
  doi          = {10.1145/3691620.3695014},
  bibsource    = {dblp computer science bibliography, https://dblp.org}
}

@inproceedings{HOPVBenchmarks,
  author       = {Hiroyuki Katsura and
                  Naoki Kobayashi and
                  Ken Sakayori and
                  Ryosuke Sato},
  editor       = {Oleg Kiselyov},
  title        = {Mode-based Reduction from Validity Checking of Fixpoint Logic Formulas
                  to Test-Friendly Reachability Problem},
  booktitle    = {Programming Languages and Systems - 22nd Asian Symposium, {APLAS}
                  2024, Kyoto, Japan, October 22-24, 2024, Proceedings},
  series       = {Lecture Notes in Computer Science},
  volume       = {15194},
  pages        = {325--345},
  publisher    = {Springer},
  year         = {2024},
  url          = {https://doi.org/10.1007/978-981-97-8943-6\_16},
  doi          = {10.1007/978-981-97-8943-6\_16},
  bibsource    = {dblp computer science bibliography, https://dblp.org}
}

@article{HeefferHinz2017ChineseRings,
  author       = {{Heeffer, Albrecht and Hinz, Andreas M.}},
  issn         = {{2182-1976}},
  journal      = {{RECREATIONAL MATHEMATICS MAGAZINE}},
  language     = {{eng}},
  number       = {{8}},
  pages        = {{5--23}},
  title        = {{'A difficult case' : Pacioli and Cardano on the Chinese Rings}},
  url          = {{http://doi.org/10.1515/rmm-2017-0017}},
  volume       = {{4}},
  year         = {{2017}},
}

@misc{DLMFBinomialRecurrence,
         key = "{\relax DLMF}",
       title = "{\it NIST Digital Library of Mathematical Functions}",
howpublished = "\url{https://dlmf.nist.gov/}, Release 1.2.8 of 2026-09-15",
         url = "https://dlmf.nist.gov/",
        note = "F.~W.~J. Olver, A.~B. {Olde Daalhuis}, D.~W. Lozier, B.~I. Schneider,
                R.~F. Boisvert, C.~W. Clark, B.~R. Miller, B.~V. Saunders,
                H.~S. Cohl, and M.~A. McClain, eds."}

@inproceedings{Knuth1991Recursion,
  author       = {Donald E. Knuth},
  editor       = {Vladimir Lifschitz},
  title        = {Textbook Examples of Recursion},
  booktitle    = {Artificial and Mathematical Theory of Computation, Papers in Honor
                  of John McCarthy on the occasion of his sixty-fourth birthday},
  pages        = {207--229},
  publisher    = {Academic Press / Elsevier},
  year         = {1991},
  url          = {https://doi.org/10.1016/b978-0-12-450010-5.50018-9},
  doi          = {10.1016/B978-0-12-450010-5.50018-9},
  bibsource    = {dblp computer science bibliography, https://dblp.org}
}
